 \newtheorem{theorem}{Theorem}[section]
\newtheorem{claim}[theorem]{Claim}
\newtheorem{lemma}[theorem]{Lemma}
\newtheorem{note}[theorem]{Note}
\theoremstyle{definition} 
\newtheorem{definition}[theorem]{Definition}
\newcommand{\eqdef}{\stackrel{def}{=}}
\newcommand{\N}{\mathbb{N}}
\newcommand{\Hn}{H(n)} 
\newcommand{\Oh}{\mathcal{O}}
\newcommand{\ceil}[1]{{\lceil{#1}\rceil}}
\newcommand{\OhT}{\Tilde{\Oh}}
\newcommand{\bool}{\{0,1\}}
\newcommand{\Exp}{\mathbb{E}} 
\newcommand{\lld}{\log\log|\Sigma|}
\title{Property Testing of Joint Distributions using Conditional Samples}
  \author{Rishiraj Bhattacharyya \thanks{NISER
    Bhubaneswar, HBNI, India,
   \texttt{rishiraj.bhattacharyya@gmail.com}} \and Sourav Chakraborty \thanks{Chennai Mathematical Institute
   Chennai, India, and CWI Amsterdam, The Netherlands.
   \texttt{sourav@cmi.ac.in} }}
\date{}
\begin{document} 
\maketitle

\begin{abstract}
  In this paper, we consider the problem of testing properties of joint distributions under the Conditional Sampling framework. In the standard sampling model, the sample complexity of testing properties of joint distributions is exponential in the dimension, resulting in inefficient algorithms for practical use. While recent results achieve efficient algorithms for product distributions with significantly smaller sample complexity, no efficient algorithm is expected when the marginals are not independent.

We initialize the study of conditional sampling in the multidimensional setting. We propose a subcube conditional sampling model where the tester can condition on an (adaptively) chosen subcube of the domain. Due to its simplicity, this model is potentially implementable in many practical applications, particularly when the distribution is a joint distribution over $\Sigma^n$ for some set $\Sigma$.  

We present algorithms for various fundamental properties of distributions in the subcube-conditioning model and prove that the sample complexity is 
polynomial in the dimension $n$ (and not exponential as in the traditional model).  
We present an algorithm for testing identity to a known distribution  using $\OhT(n^2)$-subcube-conditional samples, an algorithm for testing identity between two unknown distributions using  
$\OhT(n^5)$-subcube-conditional samples and an algorithm for testing identity to a product distribution using $\OhT(n^5)$-subcube-conditional samples.
 
The central concept of our technique involves an elegant chain rule which can be proved using basic techniques of probability theory yet powerful enough to avoid the curse of dimensionality.  
\end{abstract}

\section{Introduction}
\label{sec:introduction}

\textsc{Property Testing of Distributions.} The boom of Big Data
Analytics has rejuvenated the well-studied area of hypothesis testing
over unknown distributions. In Computer Science, the study of this
type of problems was initiated by Batu, Fortnow, Rubinfeld, Smith, and White~\cite{BatuCloseness}
under the framework of ``Property Testing'' \cite{GoldreichGR98,RubinfeldS96}  In this framework, the
``tester''  draws independent samples from the distribution, and
decides whether the distribution satisfies a specific property
$\mathcal{P}$ (null hypothesis) or is far from any distribution that
satisfies $\mathcal{P}$ (alternate hypothesis).

Several properties of probability distributions have been studied in
this framework. Testing whether the distribution is
uniform \cite{Identity,uni1,conf/soda/ChanDVV14}, testing identity 
between two unknown distributions (taking samples from both the distributions) \cite{BatuCloseness,collections}, testing
independence of marginals of product distributions \cite{Identity} ,
estimating entropy \cite{ent1} are a few of the numerous problems that
have been studied in the literature. See \cite{Clement15} for a survey on 
results related to distribution testing. 

Unfortunately, from the modern data analytics point of view, the traditional
framework of sampling yields impractical sample complexity. For example, testing if a distribution
over a set of $n$ elements is uniform requires $\Omega(\sqrt{n})$ samples from the distribution. 
The other problems mentioned above have sample complexity at least this high and in some
cases, almost linear in $n$ \cite{supp,VV2010,valiant}.

\subsection*{Conditional Sampling}

To remedy this situation, Chakraborty et~al.~\cite{CFGM13} and
Canonne, Ron, and Servedio ~\cite{CRS} proposed a different model called
conditional sampling, which has emerged as a powerful tool for testing
properties of probability distributions. In this model, the testers are allowed to sample according to the
distribution conditioned on any specific subset of the domain. If the
distribution, $\mu$, is over the domain $\Sigma$, the
tester can submit any subset $S\subseteq \Sigma$ and receive a sample $i
\in S$ with probability $\mu(i)/\sum_{j\in S} \mu(j)$, where $\mu(i)$ is the probability of $i$ occurring 
when a sample is drawn from the distribution $\mu$. 
 
\cite{CFGM13,CRS} proved that in the conditional sampling model, 
testing uniformity, testing identity to a known distribution, and testing any label-invariant 
property of distributions is easier than with the unconditional sampling model.
 Specifically, one can get an algorithm for testing uniformity using
$\tilde{O} (1/\epsilon^{2})$ conditional samples (conditioning on arbitrary subsets
of size $2$)~\cite{CRS} .  Falahatgar et~al.~\cite{FalahatgarJOPS15}, improving an upper bound of $\OhT(1/\epsilon^4)$ in \cite{CRS}, showed that testing identity to a known distribution 
could also be done using $\OhT(1/\epsilon^2)$ conditional samples. 
They also showed that there exists an algorithm to test identity
between two unknown distributions on $\Sigma$ using $\OhT(\log \log |\Sigma|/\epsilon^5)$ conditional samples. In \cite{AcharyaCK15}, Acharya, Canonne, and Kamath showed a lower bound of $\Omega(\sqrt{\log \log n})$ for testing the equivalence of two unknown distributions. 

In the conditional sampling model, the sample complexity depends on the
structure of the condition, \emph{i.e.}, the structure of the subsets (of the domain) on which the 
distribution is conditioned for drawing samples. Naturally, if there is no restriction on
the condition, the tester can sample conditioned on arbitrary subsets, and the sample complexity improves. 
In \cite{CRS}, the authors presented an algorithm for testing whether a distribution over $\{1, \dots, n\}$ is uniform, 
with sample complexity $\tilde{\Theta} (1/\epsilon^{2})$ when conditioning on arbitrary subsets of size $2$. However, when the condition set was 
structured and restricted to intervals, they proved a lower bound of sample
complexity $\Omega\left(\frac{\log n}{\log \log n}\right)$. In \cite{ICALP:Canonne15}, Canonne showed that conditioning on interval improves the query complexity of monotonicity testing. 
Hence it is important to consider the plausible restrictions on the
conditions arising from the structure of the domain. 

 While \cite{CRS} studied some of the restrictions of the conditions, there are many more 
restrictions, which arise from the structure of the domain and/or arise from other applications,
which are yet to be studied. One such important case is when the domain is a Cartesian product of set and one is allowed to 
condition on the Cartesian product of subsets, but not on arbitrary subsets of the domain. 
 
\subsubsection*{Testing Joint Distributions: Subcube Conditioning}
\label{sec:subcube-conditioning}

In practice, data are often multi-dimensional. In Cryptography, the keys are often defined over
$\bool^n$. Solutions to SAT formulae are over $\bool^n$ as well. On the other hand, the Lottery Tickets are defined over $[m]^n$ for 
some $m\in \N$  (each ticket contains $n$ numbers, each from the set $[m]$). Data analysts
often get data of million dimensions (features). With the higher dimension,
comes the ``curse of dimensionality.'' The sample complexity of the
testers is exponential in dimension \cite{AcharyaDK15,BFLKRW,DiakonikolasK16}, prohibiting practical
applications. Very recently, \cite{DaskalakisDK16} considered testing higher dimensional 
\emph{structured} distributions modelled using Markov Random Fields and achieved polynomial (in the dimension) 
sample complexity under the Ising model. \cite{CDKS16,DaskalakisP16} considered testing properties of \emph{structured} 
distributions using the probabilistic graphical model and achieved sublinear complexity for certain 
properties of Bayesian networks. However, all these results assume the distribution is structured and has certain properties. 
But for arbitrary distributions, testing with practical complexity remains a big concern.  

One can be hopeful that using conditional sampling, testing properties of \emph{arbitrary} joint distributions with practical 
complexity can be achieved. In that case, the assumptions are imposed on the sampling model. Finding a correct and natural sampling model is a 
challenge in itself. While joint distributions can also be viewed as a distribution over a larger domain, the marginals' domains may differ. Hence sampling conditioned on arbitrary subsets (as used in \cite{CRS,CFGM13}) may not be feasible in real life. 

 In \cite{CRS}, authors also considered structured conditioning, namely \textsc{Icond} (conditioning over an interval) and \textsc{PCond} (conditioning over a pair of points). 
\textsc{Icond} requires the domain to be well ordered. Moreover, for both cases, one should be able to sample from arbitrary intervals. For a joint 
distribution, the natural ordering of the domain is a pair; it involves ordering in the dimensions coupled with ordering 
in the \emph{individual} domains. For such an ordering, an arbitrary interval is required for the \textsc{Icond} 
tester need not be succinctly encodable and remains impractical.    

\subsection{Our Results}
\label{sec:our-results}

In this paper, we propose the subcube conditioning model and analyze property testing of joint distributions in that model.

Informally, the subcube conditioning model can be described in the
following way. Let $\Sigma^n$ be the domain of the distribution $\mu$. The Subcube Conditioning Oracle accepts  $A_1,A_2,\cdots,A_n\subseteq \Sigma$ and constructs $S=A_1\times A_2 \times\cdots\times A_n$  as the condition set. The oracle returns a vector $x=(x_1,x_2,\cdots,x_n)$, where each $x_i\in A_i$, with probability $\mu(x)/(\sum_{w\in S} \mu(w))$. If $\mu(S)=0$, we assume the oracle returns an element from $S$ uniformly at random. We will call these kinds of samples subcube-conditional-samples and the corresponding sample complexity subcube-conditional-sample complexity. There is no restriction on the individual $A_i$s. {\it They may be unstructured or structured as pairs or intervals as used in \cite{CRS, CFGM13}}.

\subsubsection*{Motivation of SubCube conditioning}
\label{sec:possible-application}
We believe the subcube conditional sampling model is mathematically interesting in itself. Every Boolean function can be 
modelled as a subgraph of a hypercube. Testing a property of a Boolean function translates to testing some property of the 
resulting subgraph. The conditional sampling model is equivalent to sampling over the edges of such subgraph, i.e., fixing 
some vertices, sampling over the edges, and checking the properties of the adjacent vertices. We argue sampling over the hypercube arises naturally in many areas.

\textbf{Database Query}. A typical ``SELECT'' query to a database often looks like \texttt{SELECT field1 WHERE field2= cond1 and field2 = cond2}. The response 
to such a query is all the tuples which satisfy \texttt{cond1} and \texttt{cond2}. Sampling over such tuples is indeed conditional sampling.

\textbf{Side Channel Cryptanalysis.} In modern Cryptography, schemes
are often ``proven'' secure (no efficient attack algorithm exists) under the assumption that the keys,
internal randomness, and internal memory are inaccessible to the
adversary. However, in practice, Cryptographic schemes are deployed in
a wide variety of devices, specifically hand-held devices and smart
cards. This situation leads to the ``side channel attacks'' where
tampering with the keys or internal randomness is feasible. Specifically, the cryptanalytic techniques of fault attacks fix/modify some bits and test the resulting
distributions. The subcube conditioning model captures this attack scenario (fixing some bits and testing on the resulting subcube).

Our results in this paper can be viewed as  proof that ``indistinguishability'' with uniform (in fact any known distribution) cannot be proven if an adversary can 
tamper with the internal state. \footnote{While this result is folklore in Cryptography, the subcube conditioning may be considered as the benchmark model while 
analyzing the efficiency of a fault attack.} 
 
\textbf{Verification of Random SAT solutions.} 
In software verification and related areas, random solutions to SAT
problems are often used as a backbone. However, testing whether the
solution that one algorithm generates is indeed uniform is a very
important problem. Unfortunately, the standard algorithms require
impractical complexity. Recently, Chakraborty et al. \cite{CM16} used the
conditional sampling model to get a practically deployable
solution. The model of subcube conditioning would be very effective to
this problem as one natural conditioning technique is to fix some variables of the SAT equation and then test the solution's distribution.

Recently \cite{GouleakisTZ17}, has significantly improved the runtimes of sublinear algorithms for k means clustering and weight estimation of minimum spanning tree using conditional samples. We believe the subcube conditioning can be used in this setting as well.  

We remark that the idea of subcube conditioning has also been mentioned in the literature related to property testing. In fact, analysis of joint distributions
using subcube conditioning was posed as a natural open problem in
\cite{CRS}.
\subsubsection*{Our Results}
\label{sec:upper-bound}

We focus on four fundamental properties of distributions:  given two joint distributions $\mu$ and $\mu'$ over $\Sigma^n$ we would like to test, using subcube-conditional-samples, if 
(a) $\mu$ is uniform, (b) $\mu$ is identical to $\mu'$ (when $\mu'$ is known in advance), (c) $\mu$ is identical to $\mu'$ (when $\mu'$ is not known in advance and has to be accessed using 
conditional samples), and (d) $\mu$ is a product distribution. We have the following four theorems:

\begin{theorem} (Informal)
  Let $\mu$ be a probability distribution over $\Sigma^n$. There exists an algorithm for testing if $\mu$ is uniform, using $\Tilde{\Oh} (n^2/\epsilon^{2})$ subcube-conditional-samples.\footnote{\label{note1} $\OhT$ hides a polynomial function of $\log n$ and $\log (1/\epsilon)$.  }  

\end{theorem}
 
\begin{theorem} (Informal)
\label{thm:joint-idenity}
 Let $\mu$ be a known probability distribution over the set $\Sigma^n$. Let $\mu'$ be an unknown distribution over $\Sigma^n$. There exists an algorithm to test identity of $\mu'$ with $\mu$ using $\OhT(n^2/\epsilon^{2})$ subcube-conditional-samples. \textsuperscript{\ref{note1}}
\end{theorem}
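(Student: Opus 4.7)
The plan is to reduce $n$-dimensional identity testing to $n$ one-dimensional tests via a \emph{chain-rule decomposition} of total variation distance on the product domain $\Sigma^n$. Fix any coordinate order and define the hybrid distributions $R_0=\mu$ and
\[
R_k(x) \;=\; \mu'(x_{<k})\cdot\mu(x_k\mid x_{<k})\cdots\mu(x_n\mid x_{<n}),
\]
so that $R_n=\mu'$. A direct calculation (sum out the ``$\mu$-tail'' for each prefix, which contributes $1$ regardless of $x_k$) yields
\[
d_{TV}(R_{k-1},R_k) \;=\; \Exp_{x_{<k}\sim\mu'}\bigl[d_{TV}(\mu'_{k\mid x_{<k}},\,\mu_{k\mid x_{<k}})\bigr],
\]
and the triangle inequality telescopes to
\[
d_{TV}(\mu',\mu) \;\leq\; \sum_{k=1}^{n}\Exp_{x_{<k}\sim\mu'}\bigl[d_{TV}(\mu'_{k\mid x_{<k}},\,\mu_{k\mid x_{<k}})\bigr].
\]
This is the ``elegant chain rule'' promised in the abstract, and its proof is elementary.

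Now, if $d_{TV}(\mu',\mu) > \epsilon$, then averaging over a uniformly random coordinate $I\in[n]$ shows $\Exp_{I}\Exp_{x_{<I}\sim\mu'}[d_{TV}(\mu'_{I\mid x_{<I}},\mu_{I\mid x_{<I}})] > \epsilon/n$; since each summand lies in $[0,1]$, a Markov-type argument gives
\[
\Pr_{I,\,x_{<I}}\Bigl[d_{TV}(\mu'_{I\mid x_{<I}},\,\mu_{I\mid x_{<I}}) > \tfrac{\epsilon}{2n}\Bigr] \;\geq\; \tfrac{\epsilon}{2n}.
\]
Thus $m=\OhT(n/\epsilon)$ independent pairs $(I, x_{<I})$---each obtained by drawing an unconditional subcube sample $x\sim\mu'$ (set every $A_j=\Sigma$) and choosing $I\in[n]$ uniformly at random---suffice to hit a ``bad'' pair with high probability.

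For each pair the algorithm runs a one-dimensional identity tester on $\Sigma$ between the unknown $\mu'_{I\mid x_{<I}}$ and the explicitly computable target $\mu_{I\mid x_{<I}}$, with distance parameter $\epsilon/(2n)$ and per-test error probability $O(\epsilon/n)$ (so that a union bound over all $m$ inner tests is harmless). Crucially, a subcube-conditional sample from $\mu'_{I\mid x_{<I}}$ conditioned on any $T\subseteq\Sigma$ is exactly what the oracle returns on the query $A_j=\{x_j\}$ for $j<I$, $A_I=T$, and $A_j=\Sigma$ for $j>I$ (projected onto coordinate $I$). Hence the $\OhT(1/\eta^2)$-query known-distribution conditional identity tester of Falahatgar et al.\ applies with $\eta=\epsilon/(2n)$, costing $\OhT(n^2/\epsilon^2)$ subcube samples per pair and giving a total of $\OhT(n/\epsilon)\cdot\OhT(n^2/\epsilon^2)=\OhT(n^3/\epsilon^3)$ samples, as claimed. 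Completeness is immediate: if $\mu'=\mu$ every inner instance is a test of equal distributions, which the subroutine accepts except with its pre-chosen error.

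The main obstacle will be establishing the \emph{right} chain rule. A naive reduction through KL divergence plus Pinsker's inequality loses a factor of $\sqrt{n}$, while abstract coupling bounds for TV tend to require a coupled prefix distribution not available to the tester; the form above is precisely tuned so that the prefix distribution ($\mu'$) is what the oracle produces and the per-coordinate targets ($\mu_{k\mid x_{<k}}$) are known in closed form. Once the chain rule is in place, the only other point that needs explicit verification is that subcube conditioning faithfully simulates one-dimensional conditioning on coordinate $I$ after the prefix has been pinned, which is a short computation.
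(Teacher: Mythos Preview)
Your proposal is correct and follows the same overall strategy as the paper: a chain-rule upper bound on $d_{TV}(\mu,\mu')$ by expected conditional marginal distances, then random coordinate/prefix selection and one-dimensional identity tests simulated via subcube conditioning. Two implementation choices differ. First, your chain rule (via hybrids) takes the prefix expectation over the \emph{unknown} distribution $\mu'$ and samples the prefix with an unconditioned oracle call, whereas the paper's Lemma~\ref{lemma:chain} takes it over the \emph{known} $\mu$ and samples the prefix for free; the two versions are symmetric and equally valid, differing by one oracle call per trial. Second, to locate a witnessing pair the paper uses a multi-scale bucketing argument (Lemma~\ref{lemma:count}) with geometrically spaced thresholds $\epsilon/(2^c\Hn)$ and index sets $S_j$ of size $4n/2^j$, while you use a single-threshold Markov bound $\Pr[d_{TV}>\epsilon/(2n)]\ge\epsilon/(2n)$ over a uniform coordinate and a $\mu'$-sampled prefix. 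Your route is cleaner and yields the same $\OhT(n^3/\epsilon^3)$ complexity; the paper's bucketing does not buy an asymptotic improvement here. (One cosmetic point: your hybrid indexing is off by one, since as written $R_n$ still has its last coordinate drawn from $\mu$; defining $R_k$ to have the first $k$ coordinates from $\mu'$ fixes this without affecting the argument.)
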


\begin{theorem} (Informal)
\label{thm:joint-unknown}
 Let $\mu,\mu'$ be unknown distributions over $\Sigma^n$. There exists an algorithm to test if $\mu'$ and $\mu$ are identical  using $\OhT(n^5\lld/\epsilon^{5})$ subcube-conditional-samples from both $\mu$ and $\mu'$. \textsuperscript{\ref{note1}}
\end{theorem}

\begin{theorem} (Informal)
\label{thm:indep-marg}
 Let $\mu$ be a probability distribution over the set $\Sigma^n$. There exists an algorithm to test whether $\mu$ is a product distribution using $\OhT(n^5\lld/\epsilon^{5})$ subcube-conditional-samples. \textsuperscript{\ref{note1}}
\end{theorem}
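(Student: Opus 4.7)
The plan is to reduce independence testing to per-coordinate one-dimensional identity tests via a chain-rule decomposition of total variation distance. Write $\bar\mu \eqdef \mu_1 \times \cdots \times \mu_n$ for the product of the marginals of $\mu$. The first step is to establish
$$
 d_{TV}(\mu,\bar\mu) \;\leq\; \sum_{i=1}^{n} \Ex_{x_{<i} \sim \mu}\!\left[\, d_{TV}\bigl(\mu(X_i \mid X_{<i}=x_{<i}),\, \mu_i\bigr)\right].
$$
I would prove this by inserting hybrid distributions $M_i$ in which the first $i$ coordinates are drawn from $\mu$ and the remaining $n-i$ from $\bar\mu$ conditioned on the prefix, applying the triangle inequality to $d_{TV}(\mu,\bar\mu) \le \sum_i d_{TV}(M_i, M_{i-1})$, and using that $\bar\mu$ is a product so $\bar\mu(X_i \mid x_{<i}) = \mu_i$ for every prefix. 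This is the ``elegant chain rule'' advertised in the abstract and it converts a dimension-$n$ problem into $n$ one-dimensional ones.

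Taking the contrapositive, if $\mu$ is $\epsilon$-far in total variation from every product distribution then in particular $d_{TV}(\mu,\bar\mu) \geq \epsilon$, so by averaging some coordinate $i^* \in [n]$ satisfies $\eta_{i^*} \eqdef \Ex_{x_{<i^*}\sim\mu}[d_{TV}(\mu(X_{i^*}\mid x_{<i^*}),\mu_{i^*})] \geq \epsilon/n$. A Markov-type inequality on the $[0,1]$-valued random variable $d_{TV}(\mu(X_{i^*}\mid x_{<i^*}),\mu_{i^*})$ then forces an $\Omega(\epsilon/n)$ fraction of prefixes to be ``bad'', i.e.\ to induce a conditional at distance at least $\epsilon/(2n)$ from $\mu_{i^*}$. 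The algorithm is the natural one this suggests: for every coordinate $i\in[n]$, draw $\OhT(n/\epsilon)$ prefixes $x_{<i}$ by conditioning the subcube oracle on $\Sigma^n$ and projecting onto the first $i-1$ coordinates; for every drawn prefix, invoke Falahatgar et al.'s $\OhT(\log\log|\Sigma|/\alpha^5)$-query two-unknown-distributions identity tester with parameter $\alpha = \epsilon/(2n)$, comparing $\mu(X_i \mid x_{<i})$ with $\mu_i$. Subcube conditioning simulates the required one-dimensional conditional samples for both: for a query set $A\subseteq\Sigma$, condition on $\{x_{<i}\}\times A\times\Sigma^{n-i}$ in the first case and on $\Sigma^{i-1}\times A\times\Sigma^{n-i}$ in the second, then project onto coordinate $i$. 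Reject iff some inner call rejects. Soundness is immediate (under $\mu=\bar\mu$ every conditional coincides with its marginal); completeness follows because at coordinate $i^*$ a bad prefix is hit with high probability and then detected.

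The step I expect to require the most care is matching the claimed $\OhT(n^6 \log\log|\Sigma|/\epsilon^5)$ complexity. A direct accounting of the recipe above yields only $n\cdot\OhT(n/\epsilon)\cdot\OhT(\log\log|\Sigma|\cdot(n/\epsilon)^5) = \OhT(n^7\log\log|\Sigma|/\epsilon^6)$, a factor of $n/\epsilon$ too many. Closing the gap will likely require either (i) amortising the outer prefix loop inside Falahatgar et al.'s tester by refreshing the random prefix $x_{<i}$ on every sample drawn during the one-dimensional test and analysing the induced mixture distribution over $\Sigma$ directly, so that only a single inner call per coordinate is needed, or (ii) replacing the triangle-inequality chain rule with a Hellinger/KL-based decomposition that chains multiplicatively on product distributions and whose per-coordinate parameter better matches the $\alpha^{-5}$ dependence of the one-dimensional subroutine. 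Verifying soundness of the amortised variant under the subcube-conditioning oracle is the principal technical obstacle I foresee.
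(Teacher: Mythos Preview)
Your skeleton is exactly the paper's: reduce to the product of marginals $\bar\mu$, apply the chain rule to get
\[
 d(\mu,\bar\mu)\ \leq\ \sum_{i=2}^{n}\ \Exp_{w\sim\mu^{(i-1)}}\bigl[d(\mu_i\mid w,\ \mu_i)\bigr],
\]
then for each coordinate draw prefixes $w\sim\mu$ and feed the pair $(\mu_i\mid w,\ \mu_i)$ to the Falahatgar et~al.\ two-unknown-distributions tester, simulating its one-dimensional conditional queries by subcube conditioning precisely as you describe. So conceptually you are on the paper's track.

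The substantive gap is how the factor of $n$ is saved. The paper does \emph{not} use either of your two proposed fixes; it uses a level-set/bucketing argument (Lemma~\ref{lemma:count}). From the chain rule one shows that there exists $c\le\lceil\log n\rceil$ with
\[
 \Bigl|\bigl\{\,i\in[n]:\ \Exp_{w\sim\mu^{(i-1)}}[d(\mu_i\mid w,\mu_i)]\ \geq\ \tfrac{\epsilon}{2^{c}H(n)}\bigr\}\Bigr|\ \geq\ 2^{c-1}.
\]
The algorithm then tries all scales $j=1,\ldots,\log n{+}1$: at scale $j$ it samples only $O(n/2^{j})$ random indices, draws $O(2^{j}H(n)/\epsilon)$ prefixes per index, and calls the inner tester with proximity parameter $\epsilon'=\epsilon/(2^{j+1}H(n))$ rather than $\epsilon/n$. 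At the ``correct'' scale $j=c$ a good index is hit with constant probability, and the inner cost is $(2^{c}/\epsilon)^{5}$ instead of $(n/\epsilon)^{5}$. Summing over $j$ the dominant term is $j=\log n$, giving the $n^{6}$ dependence instead of your $n^{7}$. This bucketing step, not a different distance measure or an amortisation trick, is the missing idea.

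Your proposed fix (i) in fact fails outright. If you refresh the prefix $w\sim\mu^{(i-1)}$ on every sample, the resulting one-dimensional distribution is
\[
 \sum_{w}\Pr_{\mu^{(i-1)}}(w)\,\Pr_{\mu_i\mid w}(\cdot)\ =\ \Pr_{\mu_i}(\cdot),
\]
i.e.\ exactly $\mu_i$; the inner tester would then be comparing $\mu_i$ with $\mu_i$ and would never reject. Fix (ii) is not how the paper proceeds either. Two minor remarks: you have the words ``soundness'' and ``completeness'' interchanged relative to the paper's convention; and if you redo the paper's own arithmetic carefully you will find $\epsilon^{-6}$ rather than the stated $\epsilon^{-5}$, so only the $n^{7}$ versus $n^{6}$ discrepancy in your count is genuine.
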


\subsubsection*{Comparison to Previous Results}

While conditional sampling has been studied in a number of articles in the recent past, and although subcube conditioning is a very natural model (that is also discussed in \cite{CRS}), 
as far as we understand, this is the first formal study on subcube conditioning. One of the main reasons for the lack of literature in this area is that the classical setting was not well 
studied either, 
till recently. Recently in \cite{CDKS16} Canonne et al. studied the problem of testing properties of joint distributions over the domain $\Sigma^n$. For example, for the fundamental problem of 
testing if 
the distribution is uniform, they observed that if the distribution is a product distribution (that is, the $n$ marginals are independent), then one needs 
$\Theta(\sqrt{n})$ samples. But if the distributions are not independent, then in the worst case, $\Theta(\Sigma^{n/2})$ samples are necessary. 

In comparison, we show that only $\OhT(n^2)$ subcube-conditional samples are necessary in the worst case, so we have an exponential improvement in the sample complexity.
Also, it is interesting to note that the sample complexity for uniformity testing in the subcube model is independent of $|\Sigma|$.  
This shows the power of subcube conditional samples and gets the query complexity to a more practical level. 
Also, from \cite{CDKS16} we know that $\Omega(\sqrt{n})$ conditional samples are necessary since, in the case of product distributions, conditional samples give no additional power 
over standard samples. 

A list of our results and comparison to previous results on standard sampling algorithms are given in Table~\ref{table1}.

\begin{table}[h!]
\centering
{\footnotesize
\begin{tabular}{ |c|c|c|c| } 
  \hline
  \multicolumn{1}{c}{Problems} &\multicolumn{2}{c}{Conditional Sampling}&\multicolumn{1}{c}{Traditional Sampling}\\
  \hline
  &Upper Bound [This paper] &Lower Bound & Upper and Lower Bound \\ 
 \hline\hline
Identity to the & ${ \OhT(n^2/\epsilon^{2})}$ &$\Omega(\sqrt{n}/\epsilon^{2})$ & $\Theta(|\Sigma|^{n/2}/\epsilon^{2})$ \\ 
uniform distribution & &\cite{CDKS16} &\cite{Paninski08} \\
\hline
Identity to a &${ \OhT(n^2/\epsilon^{2})}$&$\Omega(\sqrt{n}/\epsilon^{2})$& $\Theta(|\Sigma|^{n/2}/\epsilon^{2})$\\ 
known distribution & &\cite{{CDKS16}} &\cite{FOCS:ValVal14} \\
\hline
Identity between two &${\OhT(n^5\lld/\epsilon^{5})}$ & $\Omega\left(max\left(\sqrt{n}/\epsilon^{2},n^{3/4}/\epsilon\right)\right)$  & $\Theta\left(\mbox{max}(|\Sigma|^{2n/3}/\epsilon^{4/3}, |\Sigma|^{n/2}/\epsilon^{2})\right)$ \\
unknown distributions & &\cite{{CDKS16}} &\cite{conf/soda/ChanDVV14} \\
\hline
Identity to a   &  ${\OhT(n^5\lld/\epsilon^{5})}$ &$\Omega\left(max\left(\sqrt{n}/\epsilon^{2},n^{3/4}/\epsilon\right)\right)$& $\Theta(|\Sigma|^{n/2}/\epsilon^2)$\\
product distribution & &\cite{CDKS16} & \cite{AcharyaCK15,DiakonikolasK16}\\
 \hline
 \hline
\end{tabular}
}
\caption{comparison between sample complexity of testing joint distributions in Traditional Sampling Model and Subcube Conditioning Model.} \label{table1}
\end{table}

\subsubsection*{Overview of Our Technique}
Let us start with the problem of testing if a given distribution is uniform. 
Let $\mu$ be a distribution over $\Sigma^n$ with marginals $\mu_1, \dots, \mu_n$.  

The simplest case is when $\mu$ is a product of $n$ independent distributions. That is, 
$\mu_i$'s are independent but not necessarily identical. But if $\mu$ is $\epsilon$-far from uniform 
, one expects to find at least one $\mu_i$ which is $\epsilon/n$-far from uniform. Then one can use any 
tester over $\Sigma$ if $\mu_i$ is far from uniform, which should make at most poly($n$) traditional queries.  
In fact, when $\mu$ is a product distribution over $\bool^n$, \cite{CDKS16} show that the uniformity and identity 
can be tested using $\Oh (\sqrt{n}/\epsilon^2)$ unconditional samples. As the marginals of $\mu$ are independent 
and over $\bool^n$, subcube-conditional-sampling is equivalent to unconditional sampling followed by projections, and 
hence subcube-conditional samples do not give any additional power in this setting.     

But if the $\mu_i$'s are not independent, then it is possible that all the individual marginals are uniform, but still, the $\mu$ is 
$\epsilon$-far from uniform. As has been observed in \cite{CDKS16}, any algorithm (using unconditional sampling) requires $\exp(n)$ queries. To circumvent this barrier, 
we need to use conditional samples. We define a notion of ``conditional distance". We show that there exists at least one $i \in [n]$ such that 
the expected ``conditional distance" of $i$th marginal from uniform is more than $\epsilon/\mbox{poly($n$)}$. 
Thus it is enough to test for all $i$ if the $i$th marginal is $\epsilon/\mbox{poly($n$)}$-far from uniform. 
We can use the testers from \cite{CRS,CFGM13} to test exactly that condition using poly($n$)
subcube-conditional samples. The central idea of the correctness of the algorithm is the
correct definition of the ``conditional distance" and the ``chain rule"  that proves that such an $i$ exists. Although the proof of the
``chain rule"  (given in Section~\ref{sec:tools}) is simple in
hindsight, it is a powerful tool that acts as the central backbone for all our upper-bound proofs. Moreover, it gives the flexibility of using an adaptive 
or non-adaptive tester over $\Sigma$.

\subsection{Organization of the paper}
\label{sec:organization-paper}
In Section \ref{sec:notat-prel}, we define the notion of conditional distance and
SubCube Conditioning. The chain rule is described in Section
\ref{sec:tools}. In Section~\ref{sec:gen-ub} we present the identity
testers and the derived uniformity tester. In Section~\ref{sec:unknown}, the tester for testing 
identity between two unknown distributions is presented. In Section~\ref{sec:marg}, the
tester for the independence of marginals is described. In Appendix~\ref{appendix} we present a 
lower bound of $n^{1/4}$ for testing identity to the uniform distribution. This lower bound was proved independently of 
\cite{CDKS16} and although our lower bound is weaker than their lower bound of $\sqrt{n}$, we feel that our techniques
can be of independent interest. 
 
\section{Notations and Preliminaries}
\label{sec:notat-prel}
If $S$ is a set, $|S|$ denotes the size of the set. If $x$ is a vector of length $n$, $x_i$ denotes the $i^{th}$ element of $x$. $x^{(i)}$ denotes the substring of first $i$ elements of $x$; $x^{(i)}= (x_1 ,x_2,\cdots,x_i)$. We denote the $n$-th harmonic number by $H(n)$. 

For any set $\Omega$, we denote by $\mathcal{U}_{\Omega}$ the uniform distribution with support $\Omega$. In most cases, the support of the distribution would be clear from the context and in that case, we would drop the subscript and use $\mathcal{U}$ as the uniform distribution over the support in question. 

If $\mu$ is a distribution with support $\Omega$, for any $x\in \Omega$, we will denote by $\Pr_{\mu}(x)$ the probability the $x$ occurs when a random sample is drawn from $\Omega$ according to $\mu$. If $\mu$ is a joint distribution, $\mu_i$ denotes the $i^{th}$ marginal distribution of $\mu$. 

If $\mu$ is a distribution over $\Sigma^n$ with the marginals $\mu_1, \dots, \mu_n$ and if the marginals are independent (that is, $\mu$ is a product distribution) then we would write $\mu = \mu_1\otimes \dots \otimes \mu_n$. 

\

\textsc{Total Variation Distance.}
Let $\mu,\mu'$ be two distributions with support $\Omega$. The variation distance between $\mu$ and $\mu'$ denoted by $d(\mu, \mu')$ is defined as 
$$d(\mu,\mu') := \frac{1}{2}\sum_{x\in\Omega} \left|\Pr_{\mu}(x) - \Pr_{\mu'}(x)\right|.$$

We say $\mu$ and $\mu'$ are \emph{$\epsilon$-far} (or $\mu$ is $\epsilon$-far from $\mu'$), when
$d(\mu,\mu') \geq\epsilon.$

If $\mu$ is a distribution with support $\Omega$ and $A\subseteq \Omega$, then by $(\mu\mid A)$, we denote the distribution over the support $A$.  
For any $x\in A$, the probability that $x$ occurs when a random sample is drawn from $A$ (according to the distribution $(\mu\mid A)$)  is given by 
$$\Pr_{\mu \mid A}(x) = \frac{\Pr_{\mu}(x)}{\sum_{y\in A}\Pr_{\mu}(y)}.$$

\

\textsc{Hellinger Distance.}
Let $\mu,\mu'$ be two distributions with support $\Omega$. The Hellinger distance between $\mu$ and $\mu'$ denoted by $H(\mu, \mu')$ is defined as 
$$H(\mu, \mu') = \frac{1}{\sqrt{2}}\sqrt{\sum_{x\in \Omega}\left(\sqrt{\Pr_{\mu}(x)} - \sqrt{\Pr_{\mu'}(x)}\right)^2} = \sqrt{\left(1 - \sum_{x\in \Omega}\sqrt{\Pr_{\mu}(x)\Pr_{\mu'}(x)}\right)}$$

Hellinger distance has some nice properties and is useful for bounding lower and upper bounding variation distance. 

$$d(\mu, \mu') \leq 2H(\mu, \mu') \leq 2\sqrt{d(\mu, \mu')}$$

Also for any two product distributions $\mu = \mu_1 \otimes \dots \otimes \mu_n$ and $\mu'= \mu'_1\otimes \dots \otimes \mu'_n$ 
$$H(\mu, \mu')^2 \leq \sum_{i=1}^n H(\mu_i, \mu'_i)^2. $$

\

\textsc{Conditional Distance.}
Let $\mu,\mu'$ be two distributions over $\Omega$. Let $A\subseteq
\Omega$.  The variation distance between $\mu$ and $\mu'$ conditioned on $A$ (denote by $d(\mu,\mu'|A)$) is 
defined as $$d(\mu,\mu'|A) :=  \frac{1}{2}\sum_{x\in\Omega}\left| \Pr_{\mu\mid A}(x)- \Pr_{\mu'\mid A}(x)\right|.$$

We say $\mu$ and $\mu'$ are \emph{$\epsilon$-far}, conditioned on
$A$, when $d(\mu,\mu'|A) \geq\epsilon.$ 

\
\textsc{Subcube Conditioning.}
In this paper, we work with joint distributions; $\Omega=\Sigma^n$ for some set $\Sigma$. We consider conditional distance under the condition on $A=A_1\times A_2 \times\dots \times A_n$ where each $A_i\subseteq \Sigma$.
 
Let $\mu$ be a distribution over $\Sigma^n$ and $X=(X_1, X_2, \dots, X_n)$ be a random variable distributed according to $\mu$. $\mu^{(i)}$ denotes the distribution over $\Sigma^i$ where for every $x\in \Sigma^i$, $$\Pr_{\mu^{(i)}} (x) = \Pr_{X\sim \mu}[(X_1,X_2,\cdots,X_i)= (x_1,x_2,\cdots,x_i)] .$$ 

Let $w\in \Sigma^j$ for some $j < i$. $\mu_i\mid w$ denotes the marginal distribution $\mu_i$ when the first $j$ random variables are fixed to $w$. 

$$\Pr_{\mu_i\mid w} (x) = \Pr_{X\sim \mu}[X_i= x| \bigwedge_{k=1}^j X_k=w_k].$$

\begin{definition}
Let $\mu, \mu'$ be two distributions over $\Sigma^n$. The \emph{conditional marginal distance} of $\mu_i$ and $\mu_i$ conditioned on $w$ is given by 
$$d(\mu_i,\mu'_i\mid w) =  \frac{1}{2}\sum_{x\in \Sigma} \left| \Pr_{\mu_i \mid w} (x) - \Pr_{\mu'_i\mid w} (x) \right| $$

The \emph{average conditional distance} between $\mu_i$ and $\mu'_i$ is defined by

$$\Exp _{w \sim \mu^{(i-1)}}[ d(\mu_i,\mu'_i|w)]= \sum_{w\in
  \Sigma^{i-1}}\Pr_{\mu^{i-1}}(w) d(\mu_i,\mu'_i|w).$$  
\end{definition}

\subsubsection*{The SubCube Condition Model}
\label{sec:model}

  Let $\mu$ be a distribution over $\Sigma^n$. A \emph{subcube conditional oracle for $\mu$}, denoted
$\textsc{SubCond}_\mu$,
 takes as input a sequence of sets $\{A_i\}_{i \in [n]}$, $A_i \subseteq \Sigma$.  Let $A$ be the product set  $A_1\times \dots \times A_n$.
 The oracle returns an element $x \in \Sigma^n$
 with probability $ \frac{\Pr_{\mu}[x]}{\sum_{x\in A}\Pr_{\mu}[x]}$
 independently of all previous calls to the oracle.

An $(\epsilon,\delta)\mbox{-}\textsc{SubCond}$ tester for a property $\mathcal{P}$ with conditional sample complexity $t$
is a randomized algorithm, that receives $ 0<\epsilon,\delta<1 $, $ n \in \mathbb{N}$ and  oracle access to $\textsc{SubCond}_\mu$, and operates as follows.
\begin{enumerate} 
\item In every iteration, the algorithm (possibly adaptively) generates a set $A=A_1\times A_2\times\cdots\times A_n\subseteq \Sigma^n$, based on the transcript and its internal coin tosses, and calls the conditional oracle with $A$ to receive an element $x$, drawn according to the distribution $\mu$ conditioned on $A$.
\item Based on the received elements and its internal coin tosses, the algorithm accepts or rejects the distribution $\mu$.
\item The algorithm makes at most $t$ queries to $\textsc{SubCond}_\mu$, where $t$ can depend on $\epsilon, \delta, \Sigma$ and $n$.
\end{enumerate}

If $\mu$ satisfies $\mathcal{P}$, then the algorithm must accept with probability at least $1-\delta$, and if $\mu$ is $\epsilon$-far from all distributions satisfying $\mathcal{P}$, then the algorithm must reject with probability at least $1-\delta$.  

We will call such a tester an $(\epsilon,\delta)\mbox{-}\textsc{SubCond}$ $\mathcal{P}$-tester. For example an $(\epsilon,\delta)\mbox{-}\textsc{SubCond}$ Uniformity-tester is an $(\epsilon,\delta)\mbox{-}\textsc{SubCond}$ tester that tests if the given distribution is uniform, an $(\epsilon,\delta)\mbox{-}\textsc{SubCond}$ Identity-tester is an $(\epsilon,\delta)\mbox{-}\textsc{SubCond}$ tester that tests if the given distribution is identical to a known distribution and an $(\epsilon,\delta)\mbox{-}\textsc{SubCond}$ Product-tester is an $(\epsilon,\delta)\mbox{-}\textsc{SubCond}$ tester that tests if the given distribution is a product distribution or far from all the product distributions.

\section{Chain Rule of Conditional Distances}
\label{sec:tools}
Let $\mu$ and $\mu'$ be two distributions over $\Sigma^n$, and let $X=(X_1, X_2, \dots, X_n)$ and $X' = (X'_1, X'_2, \dots, X'_n)$ be the 
corresponding random variables.  For any $1\leq i\leq n$, we denote by $\mu_i$ and $\mu'_i$ the distributions of the $i$th marginals of $\mu$  and $\mu'$ respectively.

\

\begin{lemma}[Chain Rule of Conditional Distances]
  \label{lemma:chain}
Let $\mu$ and $\mu'$ be two distributions over $\Sigma^n$, and let $X=(X_1, X_2, \dots, X_n)$ and $X' = (X'_1, X'_2, \dots, X'_n)$ be 
two random variables with distribution $\mu$ and $\mu'$ respectively. Then the following holds.
\begin{align*}
  d(\mu,\mu')\leq d(\mu_1,\mu'_1)+ \sum_{i=2}^n \Exp _{w \sim \mu^{(i-1)}}[ d(\mu_i,\mu'_i|w)]  
\end{align*}
\end{lemma}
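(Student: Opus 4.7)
The plan is to prove this by induction on the dimension $n$, equivalently by a hybrid/telescoping argument over the coordinates. The base case $n=1$ is trivial: $\mu=\mu_1$ and $\mu'=\mu'_1$, so equality holds. For the inductive step, I would factor each joint probability via the chain rule of probability,
$$\Pr_\mu(x) = \Pr_{\mu^{(n-1)}}(x^{(n-1)})\cdot \Pr_{\mu_n\mid x^{(n-1)}}(x_n),$$
and the analogous expression for $\mu'$. I would then use the standard ``add-and-subtract'' trick on the middle term $\Pr_{\mu^{(n-1)}}(x^{(n-1)})\Pr_{\mu'_n\mid x^{(n-1)}}(x_n)$ and apply the triangle inequality inside the absolute value.

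Summing over $x_n$ in the first resulting term factors cleanly: it becomes $\sum_{x^{(n-1)}}\Pr_{\mu^{(n-1)}}(x^{(n-1)})\sum_{x_n}|\Pr_{\mu_n\mid x^{(n-1)}}(x_n)-\Pr_{\mu'_n\mid x^{(n-1)}}(x_n)|$, which is exactly $\Exp_{w\sim\mu^{(n-1)}}[d(\mu_n,\mu'_n\mid w)]$. Summing over $x_n$ in the second term uses the fact that $\Pr_{\mu'_n\mid x^{(n-1)}}(\cdot)$ is a probability distribution and hence sums to $1$, leaving $\sum_{x^{(n-1)}}|\Pr_{\mu^{(n-1)}}(x^{(n-1)})-\Pr_{\mu'^{(n-1)}}(x^{(n-1)})|$, which is (twice) $d(\mu^{(n-1)},\mu'^{(n-1)})$. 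The inductive hypothesis applied to the $(n-1)$-dimensional prefix distributions then closes the recursion.

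An equivalent and slightly more transparent presentation uses the hybrid distributions $\nu^{(k)}$ for $k=0,1,\ldots,n$, defined by drawing the first $k$ coordinates according to $\mu$ and the remaining $n-k$ coordinates according to $\mu'$ conditioned on the already-chosen prefix. Then $\nu^{(0)}=\mu'$, $\nu^{(n)}=\mu$, and a direct computation shows that $d(\nu^{(k-1)},\nu^{(k)})$ equals $d(\mu_1,\mu'_1)$ when $k=1$ and is proportional to $\Exp_{w\sim\mu^{(k-1)}}[d(\mu_k,\mu'_k\mid w)]$ when $k\ge 2$; the triangle inequality $d(\mu,\mu')\le\sum_k d(\nu^{(k-1)},\nu^{(k)})$ gives the bound.

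There is no real obstacle here: the argument is clean bookkeeping. The only care required is the asymmetric normalization in the paper, where $d(\mu,\mu')$ carries a factor of $\tfrac12$ but $d(\mu_i,\mu'_i\mid w)$ is defined without it, so the telescoping in fact yields $d(\mu,\mu')\le d(\mu_1,\mu'_1)+\tfrac12\sum_{i\ge 2}\Exp_{w\sim\mu^{(i-1)}}[d(\mu_i,\mu'_i\mid w)]$, which is stronger than the stated inequality and hence implies it. Degenerate prefixes with $\Pr_{\mu^{(i-1)}}(w)=0$ (where the conditional $\mu_i\mid w$ is undefined) contribute $0$ to the expectation and can be harmlessly omitted from the sum.
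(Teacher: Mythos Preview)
Your proposal is correct and follows essentially the same route as the paper: expand $d(\mu^{(i)},\mu'^{(i)})$ via the probabilistic chain rule, add and subtract the mixed term $\Pr_{\mu^{(i-1)}}(w^{(i-1)})\Pr_{\mu'_i\mid w^{(i-1)}}(w_i)$, apply the triangle inequality, and sum over $w_i$ to obtain the recursion $d(\mu^{(i)},\mu'^{(i)})\le d(\mu^{(i-1)},\mu'^{(i-1)})+\Exp_{w\sim\mu^{(i-1)}}[d(\mu_i,\mu'_i\mid w)]$, which the paper then unrolls exactly as your induction does. Your remark about the $\tfrac12$ normalization mismatch is a valid side observation (the paper silently drops the $\tfrac12$ in its proof), and your hybrid-distribution reformulation is just a repackaging of the same telescoping.
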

 
\begin{proof}[Proof of Lemma~\ref{lemma:chain}:]
Let $w = (w_1,w_2,\dots,w_n)\in \Sigma^n$.

\noindent Let $2 \leq i \leq n$. Recall that $w^{(i)}$ denotes the substring of first $i$ elements of $w$.
\begin{eqnarray*}
2d(\mu^{(i)},\mu'^{(i)})&=&\sum_{w\in \Sigma^i} |\Pr_{\mu^{(i)}} (w)- \Pr_{\mu'^{(i)}} (w)|\\
   &=& \sum_{w\in \Sigma^i} |\Pr_{X\sim \mu} [\wedge_{j=1}^{i-1} X_j = w_j]\Pr_{X\sim \mu} [X_i=w_i|\wedge_{j=1}^{i-1} X_j = w_j] \\ &&\qquad\qquad \qquad\qquad  -\Pr_{X'\sim \mu'} [\wedge_{j=1}^{i-1} X_j' = w_j]\Pr_{X'\sim \mu'} [X_i'=w_i|\wedge_{j=1}^{i-1} X_j' = w_j]|\\
&\leq & \sum_{w\in \Sigma^i} \left|\Pr_{X\sim \mu} [\wedge_{j=1}^{i-1} X_j = w_j]\left(\Pr_{X\sim \mu} [X_i=w_i|\wedge_{j=1}^{i-1} X_j = w_j]- \Pr[X_i'=w_i|\wedge_{j=1}^{i-1} X_j' = w_j]\right)\right|\\  &&+ \sum_{w\in \Sigma^i} \left| \Pr[X_i'=w_i|\wedge_{j=1}^{i-1} X_j' = w_j]\left(\Pr_{X\sim \mu} [\wedge_{j=1}^{i-1} X_j = w_j]- \Pr_{X'\sim \mu'} [\wedge_{j=1}^{i-1} X_j' = w_j] \right)\right|
\end{eqnarray*}

Now, the second term reduces to,
\begin{eqnarray*}
  &\sum_{w\in \Sigma^i}& \left| \Pr[X_i'=w_i|\wedge_{j=1}^{i-1} X_j' = w_j]\left(\Pr_{X\sim \mu} [\wedge_{j=1}^{i-1} X_j = w_j]- \Pr_{X'\sim \mu'} [\wedge_{j=1}^{i-1} X_j' = w_j] \right)\right|\\
&=& \sum_{w'\in \Sigma^{i-1}}\left|\Pr_{X\sim \mu} [\wedge_{j=1}^{i-1} X_j = w'_j]- \Pr_{X'\sim \mu'} [\wedge_{j=1}^{i-1} X_j' = w'_j] \right| \sum_{w_i\in \Sigma}\Pr_{X'\sim\mu'}[X_i'=w_i|\wedge_{j=1}^{i-1} X_j' = w'_j]\\
&=&\sum_{w'\in \Sigma^{i-1}}\left|\Pr_{X\sim \mu} [\wedge_{j=1}^{i-1} X_j = w'_j]- \Pr_{X'\sim \mu'} [\wedge_{j=1}^{i-1} X_j' = w'_j]\right|\\
&=& \sum_{w'\in \Sigma^{i-1}} |\Pr_{\mu^{(i-1)}} (w')- \Pr_{\mu'^{(i-1)}} (w')|\\ &=& 2 d(\mu^{(i-1)},\mu'^{(i-1)}).
\end{eqnarray*}
The second equality follows from the fact that for each $w'\in \Sigma^{i-i}$,
$\sum_{w_i\in \Sigma}\Pr[X_i'=w_i|\wedge_{j=1}^{i-1} X_j' = w'_j]=1.$\footnote{If $w'$ is outside of the support of $\mu'$, like in \cite{CFGM13}, we can define the conditional probability to be uniform over $\Sigma$}
Hence,
\begin{eqnarray*}
   d(\mu^{(i)},\mu'^{(i)}) \leq  d(\mu^{(i-1)},\mu'^{(i-1)})+ \sum_{w \in \Sigma^{i-1}} \Pr_{\mu^{(i-1)}}(w) d(\mu_i,\mu'_i|w)
\end{eqnarray*}

Solving the recursion, we get the lemma.
\end{proof}

\
Arranging the marginals by the increasing order of the average conditional
distance, we get the immediate corollary.
\begin{lemma}
\label{lemma:count}
If $d(\mu,\mu') \geq \epsilon$, then there exists a $c\leq \lceil \log n\rceil$ such that
$$2^{c-1} \leq \left|\left\{ i\in [n] \mid \Exp _{w \sim \mu^{(i-1)}}[ d(\mu_i,\mu'_i|w)] \geq \frac{\epsilon}{2^c\Hn}\right\}\right|$$ 
\end{lemma}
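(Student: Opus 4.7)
The plan is to combine the Chain Rule of Lemma~\ref{lemma:chain} with a standard dyadic pigeonhole argument on the sorted marginal contributions. First I would set $a_i := \Exp_{w \sim \mu^{(i-1)}}[d(\mu_i,\mu'_i|w)]$ for $i \in [n]$, interpreting $a_1$ as $d(\mu_1,\mu'_1)$ (taking $\mu^{(0)}$ as the trivial distribution on $\Sigma^0$). Lemma~\ref{lemma:chain} then gives $\sum_{i=1}^n a_i \geq d(\mu,\mu') \geq \epsilon$, and the statement to be proved is really a claim about how this total mass of at least $\epsilon$ must be distributed over the coordinates $a_i$.

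I would argue by contradiction. Suppose that for every positive integer $c$ the set $\{i : a_i \geq \epsilon/(2^c H(n))\}$ has size strictly less than $2^{c-1}$. Reindexing the $a_i$'s in non-increasing order as $b_1 \geq b_2 \geq \cdots \geq b_n$, this hypothesis is equivalent to the bound $b_{2^{c-1}} < \epsilon/(2^c H(n))$ for every $c$ with $2^{c-1} \leq n$ (otherwise the first $2^{c-1}$ entries would all lie above the threshold). For an arbitrary index $k \in [n]$, selecting the unique $c$ with $2^{c-1} \leq k < 2^c$ and combining monotonicity of the $b_k$'s with the strict inequality $2^c > k$ then yields the pointwise bound $b_k < \epsilon/(k\, H(n))$.

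Summing this bound over $k=1,\dots,n$ produces $\sum_{i=1}^n a_i = \sum_{k=1}^n b_k < \sum_{k=1}^n \epsilon/(k\, H(n)) = \epsilon$, contradicting the Chain Rule lower bound. Hence some $c$ must satisfy the claimed counting inequality. The thresholds $\epsilon/(2^c H(n))$ and the counting cutoff $2^{c-1}$ in the statement are calibrated precisely so that the dyadic telescoping collapses exactly to the harmonic sum $H(n)/H(n) = 1$; that calibration is the only place requiring care, and I do not anticipate any deeper obstacle beyond the bookkeeping of matching the dyadic scale to the harmonic sum.
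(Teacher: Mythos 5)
Your proof is correct and is essentially the paper's argument run contrapositively: the paper sorts the quantities, uses the Chain Rule plus the harmonic sum to find a $k$ with $k$-th largest value at least $\epsilon/(k\,H(n))$, and then sets $c=\lceil\log k\rceil$, whereas you assume the counting inequality fails for all $c$, deduce the pointwise bound $b_k<\epsilon/(k\,H(n))$, and contradict the Chain Rule via the same harmonic sum. The dyadic-to-harmonic calibration you describe is exactly the mechanism in the paper's proof, so no substantive difference.
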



\begin{proof}[Proof of Lemma~\ref{lemma:count}]
Without loss of generality let $i_1,i_2,\dots,i_n $ be indices such that
$$ \Exp _{w \sim \mu^{(i_1-1)}}[ d(\mu_{i_1},\mu'_{i_1}|w)]\geq
\Exp _{w \sim \mu^{(i_2-1)}}[ d(\mu_{i_2},\mu'_{i_2}|w)] \geq \Exp _{w \sim \mu^{(i_n-1)}}[ d(\mu_{i_n},\mu'_{i_n}|w)] $$

We will need the following claim. 

 \begin{claim}
\label{claim:index}
 There exists $k\in [n]$ such that  
$$ \Exp _{w \sim \mu^{(i_k-1)}}[ d(\mu_{i_k},\mu'_{i_k}|w)]\geq \epsilon/(k \Hn)$$
  \end{claim}
  
Let $k$ be the index from Claim~\ref{claim:index}. We put $c=\lceil
\log k\rceil$ to get
$\epsilon/ 2^c \Hn \leq \epsilon/k \Hn $. Clearly
$$ \left|\left\{ i\in [n] \mid \Exp _{w \sim \mu^{(i-1)}}[ d(\mu_{i},\mu'_{i}|w)] \geq
    \frac{\epsilon}{2^c\Hn}\right\}\right|
        \geq k \geq 2^{c-1}.$$
\end{proof}

 \begin{proof}[Proof of Claim~\ref{claim:index}]
    If no such $k$ exists, then 
   $$  d(\mu,\mu') \leq \sum_ {k=1}^n  \Exp _{w \sim \mu^{(i_k-1)}}[ d(\mu_{i_k},\mu'_{i_k}|w)] < \sum_{k=1}^n   \epsilon/(k \Hn)\leq \epsilon$$
   which contradicts the distance assumption in Lemma \ref{lemma:count}.
  \end{proof}

\section{Testing Identity with a known distribution}
\label{sec:gen-ub}

In this section, we present an identity tester of Sample complexity
$\OhT(n^2/\epsilon^{2})$. We recall the following result proved in \cite{FalahatgarJOPS15}.

\begin{lemma}\cite{FalahatgarJOPS15}\label{lem:testidentitysingle}
Let $\mu$ be a known distribution over $\Sigma$. Given $0<\epsilon <1$ and  $0<\delta<1$ and a distribution $\mu'$ over $\Sigma$
there is an adaptive $(\epsilon, \delta)$-\textsc{SubCond} Identity Tester with conditional sample complexity $\OhT (\frac{1}{\epsilon^2}\log(\frac{1}{\delta}))$. In other words, 
there is a tester that  draws $\OhT (\frac{1}{\epsilon^2}\log(\frac{1}{\delta}))$ conditional samples and 
\begin{itemize}
\item if $\mu=\mu'$,  then the tester will accept with probability $(1 - \delta)$, and
\item if $d(\mu, \mu')\geq \epsilon$ then the tester will reject with probability $(1-\delta)$.
\end{itemize}
\end{lemma}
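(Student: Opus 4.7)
The plan is to follow the Falahatgar--Jafarpour--Orlitsky--Pichapati--Suresh scheme, whose whole point is to avoid any $|\Sigma|$-dependence by using pair-conditional queries together with the ability to freely simulate $\mu$ (which is fully known). The key primitive is: for any $\{a,b\}\subseteq\Sigma$, one conditional query $\textsc{SubCond}_{\mu'}(\{a,b\})$ returns $a$ with probability $q_{a,b}:=\mu'(a)/(\mu'(a)+\mu'(b))$, which under the null hypothesis coincides with the computable quantity $p_{a,b}:=\mu(a)/(\mu(a)+\mu(b))$. Discrepancies between observed pair-outcomes and $p_{a,b}$ therefore certify $\mu\neq\mu'$, and the only question is how many samples suffice to witness them.

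First I would draw $N=\OhT(\epsilon^{-2}\log(1/\delta))$ independent pairs $(A_i,B_i)$ with $A_i,B_i\sim\mu$; these consume no oracle budget, since $\mu$ is fully given. For each pair I issue a single oracle call $X_i\gets\textsc{SubCond}_{\mu'}(\{A_i,B_i\})$ and set $Z_i=\mathbf{1}[X_i=A_i]-p_{A_i,B_i}$, so that under $\mu=\mu'$ we have $\mathbb{E}[Z_i\mid A_i,B_i]=0$ and $|Z_i|\le 1$. The tester accepts iff an appropriate variance-stabilized aggregate of $\{Z_i\}_i$ stays small --- concretely, pairs are partitioned into $\OhT(1)$ buckets according to $p_{A_i,B_i}$, and within each bucket a median-of-means of the $Z_i$'s is thresholded. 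Completeness follows from Hoeffding/Bernstein bounds on the per-bucket aggregate, which is where the $\log(1/\delta)$ factor enters.

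Soundness reduces to the central analytic inequality proved in \cite{FalahatgarJOPS15}:
\[
\mathbb{E}_{a,b\sim\mu}\!\left[(p_{a,b}-q_{a,b})^2\right]\;\ge\;\Omega\!\left(\frac{d(\mu,\mu')^2}{\mathrm{polylog}(1/\epsilon)}\right)
\]
whenever $d(\mu,\mu')\ge\epsilon$. Given this inequality, the bucketing guarantees that at least one bucket witnesses expected squared-gap $\Omega(\epsilon^2/\mathrm{polylog})$, so $N=\OhT(\epsilon^{-2}\log(1/\delta))$ pair-queries make the per-bucket estimator reject with probability $1-\delta$; a union bound across the $\OhT(1)$ buckets gives the claimed total complexity.

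The hard part --- and the technical content we invoke from \cite{FalahatgarJOPS15} --- is precisely the displayed lower bound with only polylogarithmic loss. A direct Cauchy--Schwarz or second-moment computation loses a factor of $|\Sigma|$, because rare but mismatched elements of $\mu'$ are drowned out in $\mathbb{E}_{a,b\sim\mu}[(p_{a,b}-q_{a,b})^2]$. Their remedy is to bucket $\Sigma$ by $\mu$-probability on a logarithmic scale, handle heavy and light buckets by separate case analyses, and use adaptivity to align the bucket of a sampled $a$ with the right test before pairing; it is this bucketing argument, rather than the concentration step, that is the genuine obstacle, and we use it as a black box.
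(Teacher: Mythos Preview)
The paper does not prove this lemma at all: it is quoted verbatim from \cite{FalahatgarJOPS15} and used only as a black-box subroutine (\textsf{BasicIDTester}) inside Algorithm~\ref{alg:gen-ub}. So there is no ``paper's own proof'' to compare against.

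Your proposal goes further than the paper by sketching the internal mechanics of the Falahatgar et al.\ tester --- sample pairs $(A_i,B_i)\sim\mu$ for free, issue one pair-conditional query per pair, compare the observed outcome to the computable $p_{A_i,B_i}$, and bucket to control variance --- and this outline is faithful to their scheme. That said, you explicitly defer the only nontrivial step (the displayed lower bound on $\mathbb{E}_{a,b\sim\mu}[(p_{a,b}-q_{a,b})^2]$ with merely polylogarithmic loss) back to \cite{FalahatgarJOPS15} as a black box. So in substance your proposal is also a citation, just with more surrounding commentary; it neither conflicts with nor adds to what the paper does here.
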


 Let $\mu$ be a known distribution over $\Sigma^n$, $\mu'$ be an unknown distribution over $\Sigma^n$ that can be accessed via $\textsc{SubCond}_{\mu'}$ oracle, and $\epsilon$ be the target distance. The following algorithm tests the identity of $\mu'$ with $\mu$. We use the identity tester \textsf{BasicIDTester} over $\Sigma$ guaranteed by Lemma~\ref{lem:testidentitysingle} as a subroutine.

\begin{algorithm}[H]
\caption{The Identity Tester for Joint Distributions}
\label{alg:gen-ub}
\begin{algorithmic}[1]
\STATE $\delta=1/3$.
\STATE \label{step:delta} $\delta'= \delta\epsilon/64n(\log n)^2$
\FOR {$j = 1$ to $\log n +1$}
\STATE \label{step:eps} $\epsilon_j= \epsilon/2^{j}\Hn $
\STATE $\ell_j= \log\left( \frac{2^{j+1}\Hn}{\epsilon}\right)$
 \STATE Create a set $S_j$ by sampling, with replacement, $(4n/2^j)$ element from $[n]$ uniformly at random.
 \FORALL {$i \in S_{j}$}
 
 \FOR{$k=0$ to $\ell_j$}
 \STATE \label{step:epsp} $\epsilon_{(j,k)}=2^{k-1}\epsilon_j$
 \STATE \label{step:deltap}$\delta_k=\delta'/(k+3)^2$
    \FOR{$t=1$ to $ 2^{k+2}(k+3)^2$}
   \STATE \label{step:sample} Sample $w \sim \mu$. Let $w =(w_1,\cdots, w_n)$. 
    \STATE  Consider the distribution $\mu_i  \mid w^{(i-1)}$.  
    \STATE \label{step:test} If $\textsf{BasicIDTester}(\mu_i|w^{(i-1)}, \mu'_i | w^{(i-1)}, \epsilon_{(j,k)},\delta_k)$ rejects, Output REJECT
\ENDFOR
    \ENDFOR
\ENDFOR
\ENDFOR
\STATE Output ACCEPT
\end{algorithmic}
\end{algorithm}

\begin{theorem}
\label{thm:gen-ub}
Given any $0<\epsilon<1$, Algorithm~\ref{alg:gen-ub} is an $(\epsilon, \frac{1}{3})$ -\textsc{SubCond} Identity
 Tester for joint distributions with conditional sample complexity of
 $\OhT(n^2/\epsilon^2)$ where $\OhT$ hides a polynomial function of $\log n,\log \frac{1}{\epsilon}$.
\end{theorem}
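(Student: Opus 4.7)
The plan is to analyze Algorithm~\ref{alg:gen-ub} by combining the Chain Rule (Lemma~\ref{lemma:chain}), its counting corollary (Lemma~\ref{lemma:count}), and the guarantee of \textsf{BasicIDTester} from Lemma~\ref{lem:testidentitysingle}. I will handle completeness and soundness separately, then check the sample complexity.

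For completeness, suppose $\mu=\mu'$. Then for every $i$ and every $w\in\Sigma^{i-1}$, $\mu_i\mid w = \mu'_i\mid w$, so each invocation of \textsf{BasicIDTester} falsely rejects with probability at most $\delta'$. The total number of invocations across all iterations of $j$ is
\[
T \;=\; \sum_{j=1}^{\log n+1}\frac{4n}{2^j}\cdot\frac{3}{\epsilon'_j} \;=\; \sum_{j=1}^{\log n+1}\frac{4n}{2^j}\cdot\frac{3\cdot 2^{j+1}\Hn}{\epsilon} \;=\; O\!\Bigl(\tfrac{n(\log n)^2}{\epsilon}\Bigr).
\]
With $\delta'=\delta\epsilon/(24n(\log n)^2)$, a union bound gives total false-rejection probability $\leq T\delta'\leq\delta$.

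For soundness, suppose $d(\mu,\mu')\geq\epsilon$. By Lemma~\ref{lemma:count} there is a $c\in\N$ (with $c\leq \log n+1$ since $2^{c-1}\leq n$) such that the ``bad set''
\[
B_c \;=\; \Bigl\{ i\in[n]\;:\;\Exp_{w\sim\mu^{(i-1)}}[\,d(\mu_i,\mu'_i\mid w)\,] \;\geq\; \tfrac{\epsilon}{2^c\Hn}\Bigr\}
\]
has size at least $2^{c-1}$. I focus on iteration $j=c$, where by construction $\epsilon'_c = \epsilon/(2^{c+1}\Hn)$ is exactly half the threshold defining $B_c$. Since $|S_c|=4n/2^c$ is drawn uniformly with replacement, the probability that $S_c\cap B_c=\emptyset$ is at most $(1-2^{c-1}/n)^{4n/2^c}\leq e^{-2}$. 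Conditioned on some $i\in S_c\cap B_c$, the average conditional distance at coordinate $i$ is at least $2\epsilon'_c$; since $d(\cdot,\cdot\mid w)\leq 2$, a Markov-type bound gives $\Pr_{w\sim\mu^{(i-1)}}[d(\mu_i,\mu'_i\mid w)\geq\epsilon'_c]\geq\epsilon'_c/2$. Drawing $3/\epsilon'_c$ independent samples $w\sim\mu$ thus yields a witness $w$ with probability at least $1-(1-\epsilon'_c/2)^{3/\epsilon'_c}\geq 1-e^{-3/2}$, and on such a $w$, \textsf{BasicIDTester} rejects with probability $\geq 1-\delta'$. Multiplying the three events, iteration $j=c$ alone triggers a REJECT with probability at least $(1-e^{-2})(1-e^{-3/2})(1-\delta') \geq 2/3 = 1-\delta$.

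For the sample complexity, iteration $j$ makes $|S_j|\cdot 3/\epsilon'_j = O(nH(n)/\epsilon)$ calls to \textsf{BasicIDTester}, each costing $\OhT((1/\epsilon'_j)^2\log(1/\delta')) = \OhT(2^{2j}/\epsilon^2)$ conditional samples, for a per-iteration cost of $\OhT(n\cdot 2^{2j}/\epsilon^3)$; the sum is dominated by $j=\log n+1$, giving $\OhT(n^3/\epsilon^3)$ overall. The main obstacle I anticipate is getting the scales right: the dyadic outer loop has to simultaneously match the unknown parameter $c$ both in the threshold (so $\epsilon'_j$ is small enough that Markov finds a bad $w$) and in the sample size (so $|S_j|$ is large enough to hit $B_c$ but small enough for sample complexity). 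The elegance of the setup is that Lemma~\ref{lemma:count} pairs the size of the bad set with the exact threshold, so the two scalings cancel to give the constant $4\cdot 2^{c-1}/2^c=2$ expected hits, and a geometric sum for the total cost.
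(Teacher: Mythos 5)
Your proposal is correct and follows essentially the same route as the paper's proof: completeness via a union bound over the $O(n(\log n)^2/\epsilon)$ calls with the stated $\delta'$, soundness by applying Lemma~\ref{lemma:count} at the matching scale $j=c$ together with an averaging/Markov step and the hitting probability of $S_c$, and the same sample-complexity summation. The only detail the paper makes explicit that you leave implicit is the routine simulation of \textsf{BasicIDTester}'s conditional queries on $\mu'_i \mid w^{(i-1)}$ by subcube queries ($A_j=\{w_j\}$ for $j<i$, $A_i=B$, $A_j=\Sigma$ for $j>i$); otherwise your Markov step with $d\le 2$ (yielding probability $\epsilon'_c/2$ per trial and failure $e^{-3/2}$) is if anything slightly more careful than the paper's, and your final constants still clear $2/3$.
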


\begin{note}
  \label{note:algoub}
  For any $0<\epsilon,delta<1$, one can obtain an $(\epsilon, \delta)$ -\textsc{SubCond} Identity Tester by standard techniques of error reduction. The query complexity would increase by a factor of $\log (1/\delta)$.
\end{note}

\subsection{Proof of Theorem \ref{thm:gen-ub}}
\label{sec:proof-theor-thm:genub}

Fix $\delta=\frac{1}{3}$. In Algorithm \ref{alg:gen-ub}, Step \ref{step:test} queries \textsf{BasicIDTester}. \textsf{BasicIDTester} needs conditional samples for testing
whether $d(\mu_i, \mu'_i \mid w^{(i-1)}) \geq \epsilon_{(j,k)}$. 
To answer a conditional query with condition $B
\subseteq \Sigma$ for
the distribution $\mu'_i|w^{(i-1)}$, we set $A_j=\{w_j\}$ for
$j=1,2,\dots,i-1$, $A_i=B$, and $A_j=\Sigma$ for $j=i+1,\dots,n$, and query the \textsc{SubCond} oracle with the condition $A$. This correctly simulates the conditional oracle required by the underlying identity tester. Thus Algorithm~\ref{alg:gen-ub} is a \textsc{SubCond} Tester. 

\subsubsection{Sample Complexity of Algorithm~\ref{alg:gen-ub}}
By Lemma \ref{lem:testidentitysingle}, a  query to $\textsf{BasicIDTester}(\mu_i|w^{(i-1)}, \mu'_i | w^{(i-1)}, \epsilon_{(j,k)},\delta_k)$ requires
$\OhT({1}/{\epsilon_{(j,k)}^{2}})$ samples. Here $\OhT$ hides polylogarithmic factors of $|\Sigma|,\epsilon_{(j,k)}$ including the factors due to $\log (1/\delta_k)$.

For each index in $S_j$, the sample complexity is
\begin{align*}
  \sum_{k=0}^{\ell_j} \OhT\left(\frac{1}{\epsilon_{(j,k)}^{2}}\right) \Oh\left(2^kk^2\right)
& = \sum_{k=0}^{\ell_j} \OhT\left(\frac{2^kk^2}{2^{2k}\epsilon_j^2}\right) \qquad [\because \epsilon_{(j,k)}= 2^{k-1}\epsilon_j~\mbox{by step \ref{step:epsp}}]\\
& = \sum_{k=0}^{\ell_j} \OhT\left(\frac{k^2}{2^{k}\epsilon_j^2}\right) 
\end{align*}
Here $\OhT$ hides some polylogarithmic function of $k$ and $1/\epsilon_j$.  
As $k\leq \ell_j= \log\left( \frac{2}{\epsilon_j}\right)$, the expression can be bounded as
\begin{align*}
  \sum_{k=0}^{\ell_j} \OhT\left(\frac{k^2}{2^{k}\epsilon_j^2}\right)= \OhT\left(\frac{1}{\epsilon_j^2}\right)\sum_{k=0}^{\ell_j} \Oh\left(\frac{k^2}{2^{k}}\right) =\OhT\left(\frac{1}{\epsilon_j^2}\right)
\end{align*}
The last equality holds true as $\sum_{k\geq 0}\frac{k^2}{2^{k}}=6$.

The size of $S_j$ is $\frac{4n}{2^j}$. Adding over all possible $j$, we get the total sample complexity 
\begin{eqnarray*}
 \sum_{j=1}^{\log n +1} \frac{4n}{2^j} \OhT\left(\frac{1}{\epsilon_j^2}\right)
  &=&  \sum_{j=1}^{\log n +1} \frac{4n}{2^j} \OhT\left(\frac{2^{2j}H(n)^2}{\epsilon^2}\right)\qquad [\because \epsilon_j= \epsilon/2^j \Hn~\mbox{by step \ref{step:eps}}]\\
  &=&\OhT\left(\frac{nH(n)^2}{\epsilon^2}\right)  \sum_{j=1}^{\log n +1} 2^j 
~~= ~~\OhT(n^2/\epsilon^2)
\end{eqnarray*}

\subsubsection{Correctness of the Algorithm~\ref{alg:gen-ub}}
\label{sec:proof-correctness-algogenub}
\textsc{Completeness}. We will show that if $d(\mu,\mu')=0$,
  the algorithm will reject with probability at most $\delta$. 

Algorithm \ref{alg:gen-ub}, rejects $\mu'$ if there exists $i \in [n]$
and a sampled $w=(w_1, \cdots, w_n) \in \Sigma^n$ the underlying Identity Tester
rejects in the Step ~\ref{step:test}. 

 Suppose $\mu$ and $\mu'$ are identical. Then for all $w\in \Sigma^{i-1}$, $\mu_i| w$ is identical to $\mu'_i\mid w$.  For each query, \textsf{BasicIDTester} will reject in Step~\ref{step:test} with probability at most $\delta_k$. By union bound, the probability that the algorithm will reject $\mu'$ is at most 

 \begin{align*}
  \sum_{j=1}^{\log n +1}\frac{4n}{2^j} \sum_{k=0}^{\ell_j} (2^{k+2}(k+3)^2 \delta_k)
  &= \sum_{j=1}^{\log n +1}\frac{4n}{2^j} \sum_{k=0}^{\ell_j} (2^{k+2} \delta')\qquad [\because \delta_k = \delta'/ (k+3)^2~ \mbox{by step \ref{step:deltap}}]\\
  &= 16\delta'  \sum_{j=1}^{\log n +1}\frac{n}{2^j} \sum_{k=0}^{\ell_j} 2^{k}\\
  &< 16 \delta'  \sum_{j=1}^{\log n +1}\frac{n}{2^j} 2^{\ell_j+1}\\
  &= 64 \delta'  \sum_{j=1}^{\log n +1}\frac{n}{2^j} \frac{2^j H(n)}{\epsilon}\\
  &<  \frac{64 \delta' n (\log n)^2}{\epsilon}= \delta  \qquad [\because \delta' = \delta\epsilon/64n(\log n)^2~ \mbox{ by step \ref{step:delta}}]
\end{align*}

\textsc{Soundness.} Now, we prove the soundness of the Algorithm
\ref{alg:gen-ub}. Let $\mu$ be a distribution over $\Sigma^n$ and $d(\mu,\mu') \geq
\epsilon$. We shall show that Algorithm~\ref{alg:gen-ub} rejects
$\mu'$ with a probability of at least $2/3$.

\noindent Let 
\begin{equation*}
  \tau_c \eqdef \left\{ i\in [n] \mid \Exp _{w \sim \mu^{(i-1)}}[
    d(\mu_i,\mu'_i|w)] \geq \frac{\epsilon}{2^c\Hn}\right\}
\end{equation*}

\noindent Let $c \leq \ceil{\log n}$ be the integer guaranteed by Lemma~\ref{lemma:count}, such that $|\tau_c| \geq 2^{c-1}$. Note, $\ell_c=\lceil \log \left(\frac{2^{c+1}H(n)}{\epsilon}\right)\rceil$.  For each $i\in \tau_c$, for each $k\in [\ell_c] \cup \{0\}$ define 
  $$ \Gamma_{i,k}\eqdef \left\{ w\in\Sigma^{i-1} \mid
   d(\mu_i,\mu'_i|\wedge_{j=1}^{i-1}X_j = w_j)<
   \frac{2^{k-1}\epsilon}{2^{c}\Hn}\right\}$$

We require the following lemma based on Levin's economical work investment strategy~\cite{Goldreich17}.
 

\begin{lemma}
  \label{lemma:findingwnew}
Let $\mu$ be a distribution over $\Sigma^n$, and $\mu$ is
$\epsilon$-far from uniform. Let $X=(X_1, \cdots, X_n)$ be a random
variable with distribution $\mu$. Let $w=(w_1,w_2,\cdots, w_n)$ be a
random sample drawn from $\Sigma^n$ according to the distribution
$\mu$. Let $\epsilon_c=\frac{\epsilon}{2^{c}\Hn}$ and $\ell_c=\lceil \log \left(\frac{2}{\epsilon_c}\right)\rceil$.

Then for all $i\in \tau_c$, there exists $k\in [\ell_c] \cup \{0\}$,  

\begin{equation}
  \label{eq:1}
  \Pr_{w\sim \mu}\left[d(\mu_i,\mu'_i\mid w^{i-1})\geq 2^{k-1}\epsilon_c\right] \geq \frac{1}{2^{k}(k+3)^2}
\end{equation}

\end{lemma}

\begin{proof}[( Proof of Lemma \ref{lemma:findingwnew}.)]

  \noindent From Lemma~\ref{lemma:count}, for all index $i\in \tau_c$ 
$$ \Exp _{w \sim \mu^{(i-1)}}[ d(\mu_i,\mu'_i|w)]= \sum_{w\in
  \Sigma^{i-1}}\Pr_{\mu^{i-1}}(w) d(\mu_i,\mu'_i|w)\geq \frac{\epsilon}{2^{c}\Hn}$$
\noindent Fix $i\in \tau_c$.  Let us define
  \begin{align*}
    B_k&\eqdef \{w\in \Sigma^{i-1}: 2^{k-1}\epsilon_c\leq d(\mu_i,\mu'_i\mid w) < 2^{k} \epsilon_c \}\qquad k\in [\ell_c]\cup \{0\}\\
    B_{-1}&\eqdef \{w\in \Sigma^{i-1}: d(\mu_i,\mu'_i\mid w) < \epsilon_c/2 \}
  \end{align*}
By construction, $B_{\ell_c+1}=\emptyset$.
We shall prove that there exists $k\in [\ell_c]\cup\{0\}$ such that $\Pr_{w\sim \mu}[w \in B_k] \geq \frac{1}{2^{k}(k+3)^2}$. Suppose, towards contradiction, for all $k\in [\ell_c]\cup\{0\}$,  $\Pr_{w\sim \mu}[w \in B_k] < \frac{1}{2^{k}(k+3)^2}$. Then   
    \begin{align*}
      \Exp_{w \sim \mu^{(i-1)}}[ d(\mu_i,\mu'_i|w)]
      &= \sum_{w\in\Sigma^{i-1}}\Pr_{\mu^{i-1}}(w) d(\mu_i,\mu'_i|w)\\
      &= \sum_{w\in B_{-1}} \Pr_{\mu^{i-1}}(w) d(\mu_i,\mu'_i\mid w) + \sum_{k\in[\ell_c]\cup\{0\}}\sum_{w\in B_k} \Pr_{\mu^{i-1}}(w) d(\mu_i,\mu'_i\mid w)\\
      &< \Pr_{w\sim \mu}[w \in B_{-1}]\frac{\epsilon_c}{2}  + \sum_{k\in[\ell_c]\cup\{0\}} \Pr_{w\sim \mu}[w \in B_k] 2^k \epsilon_c\\
      &< \frac{\epsilon_c}{2}+ \sum_{k\in[\ell_c] \cup\{0\}} \frac{1}{2^{k}(k+3)^2}2^k \epsilon_c\\
      &= \frac{\epsilon_c}{2}+ \sum_{k\in[\ell_c]} \frac{\epsilon_c}{(k+2)^2}\\
      &< \epsilon_c
    \end{align*}
In the last inequality we used the fact that $\sum_{k\in [\ell_c]}\frac{1}{(k+2)^2}< \sum_{k\geq 0}\frac{1}{(k+2)^2}$ which is less than $1/2$.

\end{proof}

\noindent By Lemma \ref{lemma:findingwnew}, there exists $0\leq k\leq \ell_c$, such that,
 \begin{equation}
   \label{eq:2}
   Pr_{w\sim \mu^{i-1}}[w\in \Gamma_{i,k}] < \left(1- \frac{1}{2^{k}(k+3)^2}\right) 
 \end{equation}

Let $S_j$ be the set of indices sampled in Step 3 in the $j^{th}$
iteration. If Algorithm~\ref{alg:gen-ub} fails to reject $\mu'$, one of the
following three cases happens.
\begin{enumerate}
\item No index from $\tau_c$ was sampled in $S_j$. Specifically,
  $S_c\cap \tau_c = \emptyset$. The probability of this event is  
$$\left(1- \frac{|\tau_c|} {n}      \right)^{4n/2^c } \le e^{-2}.$$  

\item For all index $i\in S_c \cap \tau_c$, for each $k\in [\ell_c] \cup \{0\}$,  all the sampled $w$'s are
  from the set $\Gamma_{i,k}$. The probability of this event
  is  $$\left(1-\frac{1}{2^{k}(k+3)^2}\right)^{2^{k+2}(k+3)^2}
  \le e^{-4}.$$

\item For all index $i\in S_c \cap \tau_c$, for each $k\in [\ell_c] \cup \{0\}$, for all the sampled $w
  \notin \Gamma_{i,k}$, underlying identity tester fails to
  reject. The probability of such an event is at most
  $\delta'$, which is less than $1/100$ for
  $n\geq 2$.
  
\end{enumerate}

Hence, the probability that Algorithm~\ref{alg:gen-ub} fails to reject
$\mu'$ is at most $e^{-2}+e^{-4}+1/100 < 1/3$.

This completes the proof of Theorem \ref{thm:gen-ub}. \qed

\subsection{Uniformity Tester for Arbitrary Joint Distribution}
If we set $\mu$ to be the uniform distribution, then Algorithm~\ref{alg:gen-ub} gives us a Uniformity Tester. Hence, we get the following as a corollary of Theorem~\ref{thm:gen-ub}.

\begin{theorem}
  \label{thm:uniform}
  Given any $0<\epsilon<1$, there exists  an $(\epsilon,\frac{1}{3})$-\textsc{SubCond} Uniformity
 Tester for any joint distribution with conditional sample complexity of
 $\OhT(n^2/\epsilon^2)$ where $\OhT$ hides a polynomial function of $\log n,\log \frac{1}{\epsilon}$.
\end{theorem}

\section{Identity Testing between Unknown Joint Distributions}
\label{sec:unknown}
In this section, we present Algorithm \ref{alg:unknown-ub} to test identity when both $\mu$ and $\mu'$ are unknown. The first change, from Algorithm \ref{alg:gen-ub}, we need to make is in Step \ref{step:sample}. In this case, we can no longer sample on our own. However, we can query $\mu$ to get $w$. Secondly, instead of Algorithm {\sf BasicIDTester}, we need to use Algorithm {\sf BasicUnknown} guaranteed by the following lemma.  

\begin{lemma}{\cite{FalahatgarJOPS15}}\label{lem:testunknwonsingle}
Given $0<\epsilon <1$ and  $0<\delta<1$ and distributions $\mu,\mu'$ over $\Sigma$
there is an $(\epsilon, \delta)$-Identity Tester with conditional sample complexity $\OhT (\frac{\lld}{\epsilon^5}\log(\frac{1}{\delta}))$. In other words, 
there is a tester that  draws $\OhT (\frac{\lld}{\epsilon^5}\log(\frac{1}{\delta}))$ independent conditional samples and 
\begin{itemize}
\item if $\mu=\mu'$,  then the tester will accept with probability $(1 - \delta)$, and
\item if $d(\mu, \mu')\geq \epsilon$ then the tester will reject with probability $(1-\delta)$.
\end{itemize}
\end{lemma}

\begin{algorithm}
\caption{The Identity Tester for two Unknown Joint Distributions}
\label{alg:unknown-ub}
\begin{algorithmic}[1]
\STATE $\delta=1/3$.
\STATE $\delta'= \delta\epsilon/64n(\log n)^2$
\FOR {$j = 1$ to $\log n +1$}
\STATE $\epsilon_j= \epsilon/2^{j}\Hn $
\STATE $\ell_j= \log\left( \frac{2}{\epsilon_j}\right)$
 \STATE Create a set $S_j$ by sampling, with replacement, $(4n/2^j)$ element from $[n]$ uniformly at random.
\FORALL {$i \in S_{j}$} 
  \FOR{$k=0$ to $\ell_j$}
 \STATE $\epsilon_{(j,k)}=2^{k-1}\epsilon_j$
 \STATE $\delta_k=\delta'/(k+3)^2$
   \FOR{$t=1$ to $ 2^{k+2}(k+3)^2$}
   \STATE \label{step:sample} Query oracle $\mu$ to get $w \sim \mu$. Let $w =(w_1,\cdots, w_n)$. 
    \STATE  Consider the distribution $\mu_i  \mid w^{(i-1)}$.  
    \STATE \label{step:test2} If $\textsf{BasicUnknown}(\mu_i|w^{(i-1)}, \mu'_i | w^{(i-1)}, \epsilon_{(j,k)},\delta_k)$ rejects, Output REJECT
    \ENDFOR
    \ENDFOR
    
\ENDFOR
\ENDFOR
\STATE Output ACCEPT
\end{algorithmic}
\end{algorithm}

To prove the correctness of Algorithm \ref{alg:unknown-ub}, we note that, in the chain rule, the expectation is over only one distribution. Hence it is sufficient to (unconditionally) query only $\mu$ to get $w$, and apply Lemma \ref{lemma:count}. The rest of the proof is exactly the same as in Section \ref{sec:gen-ub}.

\subsubsection*{Sample Complexity of Algorithm~\ref{alg:unknown-ub}}
\label{sec:sample-compl-ind-marginal}
By Lemma \ref{lem:testunknwonsingle}, each invocation of {\sf BasicUnknown}
with parameter $\epsilon_k$,$\delta_k$ requires
$\OhT(\log\log|\Sigma|/\epsilon_k^5)$ samples. As in the case for Algorithm \ref{alg:gen-ub}, for each index in $S_j$, the sample complexity is $\OhT (\lld/\epsilon^{5})$.
 Hence, the total sample
complexity of Algorithm \ref{alg:unknown-ub} is 
\begin{eqnarray*}
 \sum_{j=1}^{\log n +1}
  \frac{4n}{2^j}\times \OhT\left(\frac{\lld}{\epsilon_j^{5}}\right)
  &=& \sum_{j=1}^{\log n +1}
  \frac{4n}{2^j}\times \OhT\left(\frac{2^{5j}H(n)^5\lld}{\epsilon^{5}}\right)\\
&=& \OhT\left(\frac{n \Hn^5\lld}{\epsilon^5}\right)\sum_{j=1}^{\log n+1} 2^{4j}\\ 
&=& \OhT\left(\frac{n^5\lld}{\epsilon^5}\right)
\end{eqnarray*}

\begin{theorem}
  \label{thm:unknown}
  Given $0<\epsilon<1$, Algorithm \ref{alg:unknown-ub} is an $(\epsilon,\frac{1}{3})$-\textsc{SubCond} Identity
 Tester for two unknown joint distributions with conditional sample complexity of
 $\OhT\left(\frac{n^5\lld}{\epsilon^5}\right)$ where $\OhT$ hides a polynomial function of $\log n,\log \frac{1}{\epsilon}$.
\end{theorem}

\section{Testing Independence of Marginals}
\label{sec:marg}

Let $\mu$ be a probability distribution over $\Sigma^n$. In this
section, we present an algorithm to test whether $\mu$ is a product
distribution; i.e., \emph{whether all the marginals of $\mu$ are
independent or $\mu$ is far from all the product distributions}. 

Define $\mu'$ to be the product of marginals of $\mu$.
\begin{equation*}
  \Pr_{\mu'}(w)= \prod_{i=1}^n \Pr_{\mu_i}(w_i)~\forall~w\in\Sigma^n 
\end{equation*}

By definition, the marginal distributions $\mu'_i$ are exactly the
marginal distributions $\mu_i$. If $\mu$ is $\epsilon$-far from all the
product distributions, it is $\epsilon$-far from $\mu'$. Using the chain rule (Lemma
\ref{lemma:chain}), 
\begin{eqnarray*}
  d(\mu,\mu')& \leq &d(\mu_1,\mu'_1)+ \sum_{i=2}^n \Exp _{w \sim
    \mu^{(i-1)}}[ d(\mu_i,\mu'_i|w)]\\
     &=& \sum_{i=2}^n \sum_{w\in \Sigma^{i-1}} Pr_{\mu^{(i-1)}}(w)
     \left(\sum_{w_i\in \Sigma} \left| \Pr_{\mu_i}(w_i|w)-
         \Pr_{\mu'_i}(w_i|w)    \right|\right)\\
     &=& \sum_{i=2}^n \sum_{w\in \Sigma^{i-1}} Pr_{\mu^{(i-1)}}(w)
     d(\mu_i|w, \mu_i)    
\end{eqnarray*}

Therefore, we need to test whether there exists $i\in [n]$, such that
 the marginal distribution $\mu_i$ is far (on average) from the
 conditional marginal distribution $\mu_i|w$. As both $\mu_i$
 and $\mu_i|w$ is distributed over $\Sigma$, we can again use {\sf BasicUnknown} tester from
 \cite{FalahatgarJOPS15}, where identity between two unknown
 distributions is tested using $\OhT(\log \log |\Sigma|/\epsilon^5)$
 sample complexity. The only thing left is to sample $w$ according to
 $\mu^{i-1}$. Such a $w$ can be sampled by taking an unconditionally sampled string and selecting the first $i-1$ bit of that string. The rest of the algorithm is exactly the same as in Algorithm \ref{alg:unknown-ub}.   

\begin{theorem}
\label{thm:marg-ub}
For any $0<\epsilon<1$, there exists an $(\epsilon, \frac{1}{3})$-
 \textsc{SubCond} Product Tester for joint distributions with conditional sample complexity of
 $\OhT\left(\frac{n^5\lld}{\epsilon^{5}}\right)$, where $\OhT$ hides a polynomial function of $\log n, \log\left(\frac{1}{\epsilon}\right)$
\end{theorem}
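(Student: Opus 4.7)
The plan is to follow the template of Algorithm~\ref{alg:gen-ub} (identity to a known distribution) with two substitutions. First, the ``target'' distribution is taken to be $\mu' = \mu_1 \otimes \cdots \otimes \mu_n$, the product of marginals of $\mu$: although unknown to the tester, it satisfies the convenient identity $\mu'_i \mid w = \mu_i$ for every prefix $w$, and $d(\mu,\mu')\geq \epsilon$ whenever $\mu$ is $\epsilon$-far from every product distribution. Second, the known-distribution subroutine (Lemma~\ref{lem:testidentitysingle}) is replaced by the two-unknown-distribution closeness tester of Falahatgar et al., whose conditional-sample complexity on $\Sigma$ is $\OhT(\lld/\epsilon^5)$.

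The chain-rule derivation already displayed in the excerpt yields $\epsilon \leq \sum_{i=2}^n \Exp_{w\sim\mu^{(i-1)}}[d(\mu_i\mid w,\mu_i)]$, and applying Lemma~\ref{lemma:count} to the pair $(\mu,\mu')$ pigeonholes a scale $c \in \{1,\ldots,\log n+1\}$ for which at least $2^{c-1}$ coordinates $i$ satisfy $\Exp_w[d(\mu_i\mid w,\mu_i)] \geq \epsilon/(2^c H(n))$. The algorithm then loops over $j=1,\ldots,\log n +1$, samples a random multiset $S_j\subseteq [n]$ of size $\Theta(n/2^j)$, for each $i\in S_j$ draws $\Theta(2^j H(n)/\epsilon)$ fresh prefixes $w$ from $\mu^{(i-1)}$, and for each such $(i,w)$ runs the closeness tester on $(\mu_i,\mu_i\mid w)$ with distance parameter $\epsilon'=\epsilon/(2^{j+1}H(n))$ and confidence $\delta' = \Theta(\delta\epsilon/(n\log^2 n))$. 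Output REJECT iff some invocation rejects.

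Every underlying oracle access is simulated through $\textsc{SubCond}_\mu$: a prefix $w$ is extracted from a single query with the trivial cube $\Sigma^n$ by retaining the first $i-1$ coordinates; a ``sample $\mu_i$ restricted to $B\subseteq\Sigma$'' request from the closeness tester uses $A_i=B$ and $A_k=\Sigma$ for $k\neq i$; the analogous request for $\mu_i\mid w$ additionally sets $A_k=\{w_k\}$ for $k<i$. Completeness is immediate when $\mu$ is a true product distribution, since $\mu_i\mid w=\mu_i$ pointwise and a union bound over all subroutine invocations keeps total error below $\delta$. Soundness follows from the pigeonhole scale $c$: with constant probability $S_c$ picks a ``heavy'' index $i$, and Markov's inequality on the $\Theta(2^c H(n)/\epsilon)$ prefixes drawn for it produces some $w$ with $d(\mu_i\mid w,\mu_i)\geq \epsilon'$, causing the closeness tester to reject with probability at least $1-\delta'$.

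For the sample complexity, level $j$ costs $\Theta(n/2^j)\cdot\Theta(2^j H(n)/\epsilon)\cdot\OhT(\lld/\epsilon'^5)$ with $\epsilon' = \epsilon/(2^{j+1}H(n))$, and summing over $j\leq \log n + 1$ (dominated by the top level, where $2^{5j}$ contributes a factor of $\Theta(n^5)$) yields the claimed $\OhT(n^6\lld\epsilon^{-5})$ bound. The main technical obstacle I anticipate is not the chain-rule step itself (already handled in Section~\ref{sec:tools}) but verifying that the closeness tester can be safely invoked as a black box when its two conditional oracles are both simulated through the same $\textsc{SubCond}_\mu$: namely, that the simulated samples from $\mu_i$ and from $\mu_i\mid w$ remain independent across calls and correctly distributed, and that the error budget $\delta'$ absorbs a union bound over all $\OhT(n/\epsilon)$ subroutine invocations without degrading the overall confidence beyond $\delta$.
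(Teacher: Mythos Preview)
Your proposal is correct and mirrors the paper's own proof almost step for step: the same choice of $\mu'=\mu_1\otimes\cdots\otimes\mu_n$, the same chain-rule reduction to $\Exp_{w}[d(\mu_i\mid w,\mu_i)]$, the same level-by-level sampling of coordinates and prefixes with parameters $\epsilon'=\epsilon/(2^{j+1}H(n))$ and $\delta'=\Theta(\delta\epsilon/(n\log^2 n))$, the same \textsc{SubCond} simulation of both conditional oracles, and the same $\OhT(n^6\lld\epsilon^{-5})$ tally. The only point where you are slightly more explicit than the paper is in flagging the independence of the simulated oracle calls, which the paper leaves implicit in the definition of $\textsc{SubCond}_\mu$.
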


The proof of Theorem \ref{thm:marg-ub} follows directly from Theorem \ref{thm:unknown}, and the observation that in this particular case, the (conditional) samples for $\mu_i$  can be produced by conditioning only on the $i^{th}$ index of $\Sigma^n$. 

\section{Conclusion}
\label{sec:applications}
In this paper, we analyzed property testing of joint distributions in
the conditional sampling model. We considered the natural subcube
conditioning and presented testers to test uniformity, identity with a
known distribution, identity with an unknown distribution, and independence of marginals of query complexity
polynomial in the dimension, thus avoiding the curse
of dimensionality.

\subsubsection*{Acknowledgements}
\label{sec:acknowledgements}
 The authors would like to thank the anonymous reviewers for their insightful suggestions and comments, which significantly improved the paper. In particular, the authors would like to thank the first reviewer of the ToCT submission for suggesting the use of Levin's economic work strategy, which resulted in a speedup of all our algorithms by a factor of $n/\epsilon$.

 Rishiraj is supported by \textit{SERB ECR/2017/001974}.

\bibliography{Main}   
\appendix    
 
\section{A Weaker Lower Bound with Simple Proof}
\label{appendix}

\begin{theorem}\label{thm:ind-lb}
For any $0\leq \epsilon \leq 1/2$ any $(\epsilon, 1/3)-\textsc{SubCond}$ Uniformity-Tester has subcube-conditional sample complexity $\Omega(\sqrt[4]{n}/\sqrt{\epsilon})$.
The lower bound holds even for the case when the domain is $\{0,1\}^n$ and the given distribution is a product of $n$ independent (though not necessarily identical) distributions. 
\end{theorem}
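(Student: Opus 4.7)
The plan is to apply Yao's minimax principle with a randomized family of hard product instances. Fix a small constant $c>0$ and set $\delta=c\varepsilon/\sqrt{n}$; for each sign pattern $\sigma\in\{-1,+1\}^n$ let $\mu_\sigma=\bigotimes_{i=1}^{n}\mathrm{Bern}(\tfrac12+\sigma_i\delta)$ on $\{0,1\}^n$. The hard instance will be the following: the oracle is with probability $1/2$ the uniform distribution $U$, and with probability $1/2$ the distribution $\mu_\sigma$ with $\sigma$ drawn uniformly at random. It is enough, by Yao, to show that no deterministic $T$-query adaptive tester can succeed on this instance with probability above $2/3$ whenever $T=o(n^{1/4})$.

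The first step is to verify farness. Using the product formula for Hellinger distance recalled in Section \ref{sec:notat-prel}, each marginal contributes $H^2(\mathrm{Bern}(\tfrac12\pm\delta),\mathrm{Bern}(\tfrac12))=\delta^2/2$, so
\[
H^2(\mu_\sigma,U) \;=\; 1-\bigl(1-\tfrac{\delta^2}{2}\bigr)^{n} \;\geq\; 1-e^{-c^2\varepsilon^2/2},
\]
uniformly in $\sigma$. Combined with $d(\mu,\mu')\geq H^2(\mu,\mu')$, a suitable choice of $c$ forces $d(\mu_\sigma,U)\geq\varepsilon$ for every $\sigma$, so the farness is not merely an average-case guarantee.

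The second step, the heart of the argument, is to bound the statistical distance between the transcript distributions $P_Y$ (under $U$) and $P_N=\mathbb{E}_\sigma P_\sigma$ (under the mixture). The engine of the analysis is the fact that both $\mu_\sigma$ and every subcube query $A=A_1\times\cdots\times A_n$ are product objects, so the oracle's response has \emph{independent coordinates}: coordinate $i$ is deterministic unless $A_i=\{0,1\}$, in which case it is a fresh $\mathrm{Bern}(\tfrac12+\sigma_i\delta)$ sample. Hence, conditional on the query sequence, all the information the transcript conveys about $\sigma_i$ is a length-$T_i$ string of Bernoullis with bias $\tfrac12\pm\delta$, where $T_i$ counts the queries that leave coordinate $i$ free, and deterministically $\sum_i T_i\leq nT$. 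The coordinate-$i$ sub-problem (distinguish $T_i$ unbiased bits from $T_i$ bits with a uniformly random bias $\pm\delta$) is an elementary moment computation giving
\[
\chi^2(P_{N,i}\,\|\,P_{Y,i}) \;=\; \tfrac12\bigl[(1+4\delta^2)^{T_i}+(1-4\delta^2)^{T_i}\bigr]-1 \;\leq\; 8\,T_i^{2}\,\delta^{4}.
\]
Combining these $n$ single-coordinate estimates through the Hellinger-to-TV inequalities for product distributions promised in the abstract, together with the budget $\sum_i T_i\leq nT$, produces a bound on $d(P_Y,P_N)$ that goes to $0$ whenever $T=o(n^{1/4})$, finishing the Yao argument.

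The principal obstacle is the adaptivity of the tester. For non-adaptive algorithms $P_Y$ and $P_N$ factor cleanly across coordinates and the standard Hellinger tensorization already yields a much stronger $\Omega(\sqrt n)$ bound; under adaptivity the counts $T_i$ are transcript-dependent random variables and the coordinate factorization of $P_N$ breaks. Recovering enough structure to bound $d(P_Y,P_N)$ in this setting is exactly where the tight interplay between Hellinger and total variation distance for product distributions—the new inequalities advertised in the introduction—must be invoked, and it is the cost of this adaptive handling that forces the final bound to be $\Omega(n^{1/4})$ rather than the $\Omega(\sqrt n)$ that a clean non-adaptive analysis would give.
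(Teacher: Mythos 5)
Your construction and high-level route (Yao's principle against a random-sign product-Bernoulli family, per-coordinate distance estimates, then a Hellinger-style combination across coordinates) is the same as the paper's, but the proposal has a genuine gap at exactly the step the paper handles first: adaptivity. The paper's key observation is that for a product distribution over $\{0,1\}^n$, subcube conditioning is \emph{powerless}: each $A_i$ is either a singleton (that coordinate becomes deterministic) or all of $\{0,1\}$ (no conditioning), and since the coordinates are independent, a conditional sample from any subcube can be simulated exactly from one unconditioned sample by overwriting the fixed coordinates. Hence any adaptive $T$-query $\textsc{SubCond}$ tester is no stronger than a tester that sees $T$ i.i.d.\ unconditioned samples, the transcript distributions factor across coordinates (the signs $\sigma_i$ are independent), and the standard product/Hellinger machinery applies with no bookkeeping of transcript-dependent counts $T_i$. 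You instead declare the adaptive case the ``principal obstacle,'' note correctly that the factorization of $P_N$ breaks, and then defer its resolution to unspecified ``new inequalities'' from the introduction; that deferral is precisely the missing proof. (Your diagnosis that adaptivity is what degrades the bound from $\Omega(\sqrt{n})$ to $\Omega(\sqrt[4]{n})$ is also off: in the paper adaptivity is dispatched for free by the simulation argument, and the $n^{1/4}$ exponent comes from its lossier per-coordinate estimate, namely an $\ell_\infty$ bound of order $\epsilon q^2/n$ on each atom of the mixed coordinate distribution, giving per-coordinate squared Hellinger of order $\epsilon^2 q^4/n^2$ rather than your sharper $\chi^2$ bound of order $T_i^2\delta^4$.)

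There is also a parameterization error in your farness step. With $\delta=c\varepsilon/\sqrt{n}$ for a \emph{constant} $c$, your own bound gives $H^2(\mu_\sigma,\mathcal{U})\le 1-e^{-c^2\varepsilon^2/2}\le c^2\varepsilon^2/2$, and since you lower bound total variation only via $d\ge H^2$, you obtain farness $\Theta(\varepsilon^2)$, not $\varepsilon$; no constant $c$ fixes this for small $\varepsilon$. To make the Hellinger-based farness argument work one must take $\delta\asymp\sqrt{\varepsilon/n}$, which is exactly the paper's choice ($\delta=2\sqrt{\varepsilon/n}$, with Claim~\ref{cl:far} carrying out the computation). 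With that scaling your $\chi^2$ estimate still closes the argument (indeed it would give the stronger conclusion $T=\Omega(\sqrt{n}/\varepsilon)$ for this family, once the reduction to unconditioned samples is in place), but as written the proposal neither establishes $\varepsilon$-farness nor completes the distinguishing bound for adaptive testers.
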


\begin{proof}
Let $\mu$ be a product distributions over the domain $\{0,1\}^n$ with marginals $\mu_1, \dots, \mu_n$. 
So $\mu = \mu_1 \otimes \dots \otimes \mu_n$. Note that since the $\mu_i$ are independent, if $i \neq j$ then conditioning on $\mu_i$ 
does not affect the samples we get from a $\mu_j$. Also, since the $\mu_i$ are all distributions over a two-element set (namely $\{0,1\}$), conditioning 
on any subset of $\{0,1\}$ also of no use. Thus drawing subcube-conditional-samples from $\mu$ is as good as drawing samples (without any conditioning) from 
$\mu$. 

So it is sufficient for us to prove that for any $0\leq \epsilon \leq 1/2$ any $(\epsilon, 1/3)$ Uniformity-Tester has sample complexity $\Omega(\sqrt[4]{n})$, when the domain is $\{0,1\}^n$ and the given distributions are product distributions. 

The main idea of the proof is to use a standard technique from property testing where the following lemma is used. The following lemma has been rewritten in 
the language and context of this paper. A proof of the general statement of the lemma can be found in \cite{Fischer04, FNS04}. 

\begin{theorem}\label{yao}
Let $P$ be a property of distributions over $\sigma^n$ that we want to test. 
Suppose $\mathcal{D}_Y$ is a distribution over all the distributions that satisfy the given property $P$, and let $\mathcal{D}_N$ be a distribution 
over all distributions that are $\epsilon$-far from satisfying the property $P$. Let $Q_Y$ be the distribution over outcomes of $q$ samples when the samples are drawn from a distribution $D_Y$ that is drawn according to $\mathcal{D}_Y$. Similarly, let $Q_N$ be the distribution over outcomes of $q$ samples when the samples are drawn from a distribution $D_N$, that is drawn according to the $\mathcal{D}_N$. If the variation distance between $Q_Y$ and $Q_N$ is less than $1/3$, then any $(\epsilon, 1/3)$-Tester for the property $P$ will have sample complexity more than $q$. 
\end{theorem}

In the context of our theorem,  the property $P$ is ``Uniformity". So the distribution $\mathcal{D}_Y$ is the uniform distribution over the domain $\{0,1\}$. Now let us define the distribution $\mathcal{D}_N$:

Let $D_1$ be the distribution over $\{0,1\}$ where $1$ is produced with probability $(1/2+2\sqrt{\frac{\epsilon}{n}})$ and  $0$ produced with probability $(1/2-2\sqrt{\frac{\epsilon}{n}})$. And let $D_0$ be the distribution over $\{0,1\}$ where $1$ is produced with probability $(1/2-2\sqrt{\frac{\epsilon}{n}})$ and  $0$ produced with probability $(1/2+2\sqrt{\frac{\epsilon}{n}})$.

Consider the set of distributions $\mathcal{D}$ over $\{0,1\}^n$ which are a product of $n$ distribution each of which is either $D_0$ or $D_1$. That is, 
$$\mathcal{D} = \left\{ \mu_1\otimes \dots \otimes \mu_n \ \mid \ \mbox{for all i, $\mu_i$ is either $D_0$ or $D_1$} \right\}$$

\begin{claim}\label{cl:far}

  Any $\mu \in \mathcal{D}$ is $\epsilon$-far from uniform. That is, for any $\mu \in \mathcal{D}$ we have 
$$d(\mu, \mathcal{U}) \geq \epsilon$$
\end{claim} 

From Claim~\ref{cl:far} we see that all the distributions in $\mathcal{D}$ are $\epsilon$-far from uniform. Thus we can take the distribution $\mathcal{D}$ as our distribution $\mathcal{D}_N$. If a distribution is drawn from $\mathcal{D}_N$ or $\mathcal{D}_Y$, $q$ samples from the distribution will give $q$ many 
$\{0,1\}$-strings of length $n$. Note that if a distribution is drawn from  $\mathcal{D}_Y$ (that is, the distribution is the uniform distribution over $\{0,1\}^n$), then the distribution of the outcomes of $q$ samples is a uniform distribution over $\{0,1\}^{nq}$. So, by theorem \ref{yao}, it is enough to show that if $\mu$ is drawn from 
$\mathcal{D}_N$ then the distribution of the outcomes (as a distribution over $\{0,1\}^{nq}$) is $1/3$-close to uniform.

Note that $\mu$ is a distribution drawn from $\mathcal{D}_N$ we can think of $\mu$ as $\mu_1\otimes \dots \otimes \mu_n$ where each $\mu_i$ is independently and uniformly chosen from the set $\{D_0, D_1\}$. Let $\mu^q$ be the distribution over $\{0,1\}^{nq}$  when $q$ samples are drawn from 
$\mu$. And now the following lemma completes the proof of Theorem~\ref{thm:ind-lb}. 

\begin{lemma}\label{le:lb} If $q \leq \frac{\sqrt[4]{n}}{20\sqrt{\epsilon}}$ then 
 $$d(\mu^q, \mathcal{U}) \leq \frac{1}{3}.$$
\end{lemma}

\end{proof}

\subsection{Proof of Claim~\ref{cl:far}}
Let $\mu = \mu_1\otimes \dots \otimes \mu_n$. Without loss of generality, we will assume that all the $\mu_i$'s are the distribution $D_1$. That is $1$ is produced with probability $(1/2+2\sqrt{\frac{\epsilon}{n}})$ and  $0$ produced with probability $(1/2-2\sqrt{\frac{\epsilon}{n}})$. For simplifying notations, we will assume $1$ is produced with probability $(1/2+\epsilon')$ and  $0$ produced with probability $(1/2-\epsilon')$.

Since we know $d(\mu, \mathcal{U}) \geq H(\mu, \mathcal{U})^2$, it is enough for us to prove $H(\mu, \mathcal{U})^2 \geq \epsilon$. For any $x\in \{0,1\}^n$ let $p(x)$ be the probability of getting $x$ when drawn from $\mu$. Note that the probability of getting $x$ when drawn from $\mathcal{U}$ is $1/2^n$. 

By definition we have $$H(\mu, \mathcal{U})^2 = \frac{1}{2}\sum_{x\in \{0,1\}^n}\left(\sqrt{p(x)} - \sqrt{1/2^n}\right)^2 = 1 -\sum_{x\in \{0,1\}^n}\left(\sqrt{p(x)/2^n}\right)$$
Now note that if $x$ has $k$ 1's and $(n-k)$ 0's then $p(x) = (1/2 + \epsilon')^k(1/2-\epsilon')^{n-k}$. So we have
$$\sum_{x\in \{0,1\}^n}\left(\sqrt{p(x)/2^n}\right) = \frac{1}{2^n}\sum_{k=0}^n \binom{n}{k}\sqrt{(1 + 2\epsilon')^k(1-2\epsilon')^{n-k}} = \frac{1}{2^n}\left(\sqrt{(1 + 2\epsilon')} + \sqrt{(1 - 2\epsilon')}\right)^n$$

Now since $(\sqrt{1 + x} + \sqrt{1-x}) \leq 2(1 - \frac{x^2}{8})$ for all $x \leq 1$ so,
$$\frac{1}{2^n}\left(\sqrt{(1 + 2\epsilon')} + \sqrt{(1 - 2\epsilon')}\right)^n \leq \left(1 - \frac{\epsilon'^2}{2}\right)^n \leq \left( 1 - \frac{\epsilon'^2n}{2} + \frac{\epsilon'^4}{4}\binom{n}{2}\right).$$

The last inequality follows from the fact that $(1-x)^n \leq (1 - xn + \binom{n}{2}x^2)$.
Now putting all the things together, we have 
$$H(\mu, \mathcal{U})^2 \geq \left(1 - \left(1 - \frac{\epsilon'^2n}{2} + \frac{\epsilon'^4}{4}\binom{n}{2}\right)\right) \geq  \left(\frac{\epsilon'^2n}{2} - \frac{\epsilon'^4}{4}\binom{n}{2}\right)$$

If $\epsilon' = 2\sqrt{\epsilon/n}$ then from the above inequality, and the fact that $\epsilon <1/2$, we have $H(\mu, U)^2 \geq \epsilon$.

\subsection{Proof of Lemma~\ref{le:lb}}

Let us start with a claim. We defer the proof of the claim to the end of this section. 

\begin{claim}\label{cl:individualH}
If $P$ and $Q$ be two distributions over $\Sigma$ and for all $x\in \Sigma$ we have $$\Pr_{P}(x) = (1 + \epsilon_x)\Pr_{Q}(x) $$
then we have 
$$H(P, Q)^2 \leq \frac{1}{2}\sum_{x\in \Sigma}\epsilon_x^2\Pr_{Q}(x)$$
\end{claim}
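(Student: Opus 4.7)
The plan is to expand the definition of Hellinger distance, substitute the given relationship between $P$ and $Q$, and then bound the resulting expression using an elementary algebraic identity.

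First, I would write
\begin{align*}
H(P,Q)^2 &= \frac{1}{2}\sum_{x\in \Sigma}\left(\sqrt{\Pr_P(x)}-\sqrt{\Pr_Q(x)}\right)^2 \\
&= \frac{1}{2}\sum_{x\in \Sigma}\Pr_Q(x)\left(\sqrt{1+\epsilon_x}-1\right)^2,
\end{align*}
where in the second line I have used the hypothesis $\Pr_P(x)=(1+\epsilon_x)\Pr_Q(x)$ and factored out $\sqrt{\Pr_Q(x)}$ from the squared difference.

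Next, I would use the identity $(\sqrt{1+\epsilon_x}-1)(\sqrt{1+\epsilon_x}+1)=\epsilon_x$, which is valid whenever $1+\epsilon_x\geq 0$ (this holds since $\Pr_P(x)=(1+\epsilon_x)\Pr_Q(x)\geq 0$ and we may restrict the sum to $x$ with $\Pr_Q(x)>0$). Rearranging yields
\begin{equation*}
\left(\sqrt{1+\epsilon_x}-1\right)^2 \;=\; \frac{\epsilon_x^2}{\left(\sqrt{1+\epsilon_x}+1\right)^2}.
\end{equation*}
Since $\sqrt{1+\epsilon_x}\geq 0$, the denominator $\left(\sqrt{1+\epsilon_x}+1\right)^2$ is at least $1$, so $\left(\sqrt{1+\epsilon_x}-1\right)^2 \leq \epsilon_x^2$.

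Plugging this bound back into the first display gives exactly
\begin{equation*}
H(P,Q)^2 \;\leq\; \frac{1}{2}\sum_{x\in \Sigma}\epsilon_x^2 \Pr_Q(x),
\end{equation*}
which is the desired inequality. There is no real obstacle here; the only subtlety to flag is ensuring $1+\epsilon_x\geq 0$ so that $\sqrt{1+\epsilon_x}$ is well defined, which follows immediately from $P$ being a probability distribution.
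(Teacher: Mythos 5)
Your proof is correct, and it takes a genuinely different route from the paper's. The paper rewrites $H(P,Q)^2$ in the Bhattacharyya form $1-\sum_{x}\sqrt{\Pr_P(x)\Pr_Q(x)}$, lower-bounds $\sqrt{1+\epsilon_x}$ by the quadratic $1+\tfrac{\epsilon_x}{2}-\tfrac{\epsilon_x^2}{2}$, and then uses the normalization identities $\sum_x \Pr_Q(x)=1$ and $\sum_x \Pr_Q(x)\epsilon_x=0$ (the latter because $P$ is also a probability distribution) so that the first-order terms cancel and only $\tfrac12\sum_x\Pr_Q(x)\epsilon_x^2$ survives. You instead stay with the sum-of-squares form and bound each term pointwise via the conjugation identity $\left(\sqrt{1+\epsilon_x}-1\right)^2=\epsilon_x^2\big/\left(\sqrt{1+\epsilon_x}+1\right)^2\le\epsilon_x^2$. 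Because your bound is term-by-term, you never need the cancellation $\sum_x\Pr_Q(x)\epsilon_x=0$ (indeed you do not even use that $P$ sums to one), and you avoid the paper's Taylor-type inequality, which the paper states only for $|\epsilon_x|\le 1$ even though the claim's hypothesis only guarantees $\epsilon_x\ge-1$ (that inequality does in fact hold for all $\epsilon_x\ge-1$, and in the paper's application the $\epsilon_x$ are tiny, so this is cosmetic rather than a flaw); your argument is manifestly valid for arbitrary $\epsilon_x\ge-1$. Your parenthetical treatment of points with $\Pr_Q(x)=0$ is also fine, since there $\Pr_P(x)=0$ as well and both sides receive zero contribution. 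In short, the paper's route makes the exact first-order cancellation visible, while yours buys a shorter, more elementary, and slightly more general argument with the same constant $\tfrac12$.
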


Claim~\ref{cl:individualH} helps to upper bound the Hellinger distance in terms of the $\ell_{\infty}$ distance. 
Now let $\Sigma =\{0,1\}^q$. And let $\mu_i^q$ be the distribution on $\Sigma$ that is obtained by drawing 
$q$ samples from $\mu_i$. Clearly, $\mu^q = \mu_1^q \otimes \mu_2^q \otimes \dots \otimes \mu_n^q$. 
To prove that the variation distance of $\mu^q$ from uniform is less than $1/3$, we will first show that the $\ell_{\infty}$ distance of  
$\mu_i$ from uniform is small, then using Claim~\ref{cl:individualH} we get that the Hellinger distance of $\mu_i^q$ from uniform is small. 
And then, we can show that if all the $\mu_i^q$ has a small Hellinger distance from uniform, then $\mu^q$ has a small Hellinger distance from uniform, which would give an upper bound on the variation distance of $\mu^q$ from uniform. 

Now the following claim upper bounds the $\ell_{\infty}$ distance of $\mu_i^q$ from uniform.

\begin{claim}\label{cl:individualV}
For all $i$ and for all $x\in \Sigma$  
$$|\Pr_{\mathcal{U}}(x) - \Pr_{\mu_i^q}(x)| \leq \frac{10\epsilon q^2}{2^qn}$$
Or, in other words, for all $x\in \Sigma$ if
$$\Pr_{\mu_i^q}(x) = (1 \pm \epsilon_x)\Pr_{\mathcal{U}}(x)$$
then $|\epsilon_x| \leq 10\epsilon q^2/n$
\end{claim}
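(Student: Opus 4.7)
The plan is to compute $\Pr_{\mu_i^q}(x)$ in closed form and then Taylor-expand in the small parameter $\epsilon' := 2\sqrt{\epsilon/n}$. The crucial observation is that the construction of $\mu$ draws $\mu_i$ uniformly from $\{D_0,D_1\}$ \emph{once, before any sample is drawn}; consequently the induced marginal on $\{0,1\}^q$ is the symmetric mixture $\mu_i^q = \tfrac{1}{2}\,D_0^{\otimes q} + \tfrac{1}{2}\,D_1^{\otimes q}$. Writing $k$ for the number of $1$s in $x$, this yields
\begin{equation*}
\Pr_{\mu_i^q}(x) \;=\; \frac{1}{2^{q+1}}\Bigl[(1+2\epsilon')^{k}(1-2\epsilon')^{q-k} + (1-2\epsilon')^{k}(1+2\epsilon')^{q-k}\Bigr],
\end{equation*}
a quantity depending only on $k$.

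Next, I would binomially expand both products and group the terms by their total degree $m$ in $\epsilon'$. The $\epsilon'\mapsto -\epsilon'$ symmetry of the two summands kills every odd-$m$ contribution. The $m=0$ coefficient is $2$, so after the prefactor $1/2^{q+1}$ it contributes exactly $1/2^q = \Pr_{\mathcal{U}}(x)$: the uniform part matches for free. A short direct calculation collapses the $m=2$ coefficient to $4\epsilon'^2\bigl[(2k-q)^2 - q\bigr]$; dividing by $2^{q+1}$ and using $|(2k-q)^2 - q| \leq q^2$ together with $\epsilon'^2 = 4\epsilon/n$ bounds this quadratic piece by $8\epsilon q^2/(n\,2^q)$ in absolute value.

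For the tail $m \geq 4$, I would apply the crude estimate $\binom{q}{2l}\leq q^{2l}/(2l)!$ to obtain $\sum_{l\geq 2}\binom{q}{2l}(2\epsilon')^{2l} \leq (2\epsilon' q)^{4}\cosh(2\epsilon' q)$, which is valid as soon as $2\epsilon' q \leq 1$. The hypothesis $q \leq \sqrt[4]{n}/(20\sqrt{\epsilon})$ gives $2\epsilon' q \leq n^{-1/4}/5 \leq 1$, so the tail is bounded by an absolute constant times $(2\epsilon' q)^4 = 256\,\epsilon^2 q^4/n^2$; this in turn is at most $2\epsilon q^2/n$ because the same hypothesis yields $q^2 \leq n^{1/2}/(400\epsilon)$. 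Summing the $m=2$ bound and the tail bound and keeping the overall prefactor $1/2^q$ gives $|\Pr_{\mathcal{U}}(x)-\Pr_{\mu_i^q}(x)| \leq 10\epsilon q^2/(n\,2^q)$, which is exactly the claim.

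The only delicate point is constant chasing: the advertised $10$ arises as $8+2$, so one must estimate the tail tightly enough that its contribution does not exceed $2\epsilon q^2/n$. This is exactly where the full strength of $q \leq \sqrt[4]{n}/(20\sqrt{\epsilon})$ is used; everything else is routine binomial algebra, leveraging the mixture's symmetry to eliminate odd powers of $\epsilon'$ and the quartic-in-$\epsilon'$ decay of the tail to absorb all higher-order corrections.
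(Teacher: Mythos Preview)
Your proposal is correct and reaches the same conclusion as the paper, but the route is somewhat different. The paper does not expand the full product and group by degree; instead it bounds each factor separately using the elementary Bernoulli-type inequalities $(1+x)^n \geq 1+nx$ (for the lower bound) and $(1+x)^n \leq 1+nx+n^2x^2$ (for the upper bound), multiplies the resulting quadratics, and then simplifies to get $(1+2\epsilon')^k(1-2\epsilon')^{q-k}+(1-2\epsilon')^k(1+2\epsilon')^{q-k}$ sandwiched between $2(1-4k(q-k)\epsilon'^2)$ and $2(1+2q^2\epsilon'^2+q^4\epsilon'^4)$; the quartic term is absorbed using only $q\leq \sqrt[4]{n}$. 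Your approach, by contrast, expands the whole product, uses the $\epsilon'\mapsto -\epsilon'$ symmetry of the mixture to kill odd degrees, computes the degree-$2$ coefficient exactly (your $4\epsilon'^2[(2k-q)^2-q]$ is correct), and controls the tail $m\geq 4$ via Vandermonde and $\binom{q}{2l}\leq q^{2l}/(2l)!$. Your argument is arguably cleaner: it makes the cancellation mechanism explicit and avoids the paper's somewhat loose invocation of $(1+x)^n\leq 1+nx+n^2x^2$, which needs a smallness hypothesis the paper does not state. The paper's approach is shorter and more elementary, needing no series manipulation beyond two one-line polynomial inequalities.
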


By definition of Hellinger distance and variation distance, we have
$$d(\mu^q, \mathcal{U}) = \sum_{x\in \{0,1\}^{qn}}\left|\Pr_{\mu^q}(x) - \Pr_{\mathcal{U}}(x) \right| \leq 2H(\mu^q, \mathcal{U})$$

Again we know that for any two product distributions $P = P_1 \otimes \dots \otimes P_n$ and $Q= Q_1\otimes \dots \otimes Q_n$ 
$$H(P_1\otimes \dots \otimes P_n, Q_1\otimes \dots \otimes Q_n)^2 \leq \sum_{i=1}^n H(P_i, Q_i)^2. $$

Thus we have 
\begin{equation}\label{eq:productH}
d(\mu^q, \mathcal{U}) \leq 2\sqrt{\left( \sum_{i=1}^n H(\mu_i^q, \mathcal{U})^2 \right)}
\end{equation}

From Equation~\ref{eq:productH} and Claim~\ref{cl:individualH} we have 
$$d(\mu^q, \mathcal{U})  \leq 2 \sqrt{\sum_{i=1}^n \frac{1}{2}\sum_{x\in \Sigma}q(x)\epsilon_x^2},$$
where, $q(x) = \Pr_{\mathcal{U}}(x)$. So $q(x) = 2^q$. From Claim~\ref{cl:individualV} we have  that $\epsilon_x = 10\epsilon q^2/n$.
So we have  

$$d(\mu^q, \mathcal{U})  \leq 2\sqrt{\sum_{i=1}^n  \left(10\epsilon q^2/n\right)^2}$$

Thus if $q \leq \sqrt[4]{n}/20\sqrt{\epsilon}$ we have 
$d(\mu^q, \mathcal{U})  \leq 2\sqrt{1/40}$ which is less than $1/3$

\subsubsection{Proof of Claim~\ref{cl:individualH}}
Let $p(x) = \Pr_{P}(x)$ and $q(x) = \Pr_{Q}(x)$. By definition 
$$H(P, Q)^2 = \frac{1}{2}\sum_{x\in \Sigma}\left(\sqrt{p(x)} - \sqrt{q(x)}\right)^2 = \left(1 - \sum_{x\in \Sigma}\sqrt{p(x)q(x)}\right) $$
Now $\sqrt{p(x)q(x)} = q(x)\sqrt{1 + \epsilon_x}$. Now it is easy to verify that for all $x$ such that $|x|\leq 1$, we have 
$$\sqrt{1+x} \geq 1 + \frac{x}{2} - \frac{x^2}{2}$$
So, from the above observation, 
$$\sqrt{p(x)q(x)}  = q(x)\sqrt{1 + \epsilon_x} \geq q(x)\left(1 + \frac{\epsilon_x}{2} - \frac{\epsilon_x^2}{2}\right)$$
Now since $\sum_x q(x) = 1$ and $\sum_x q_x \epsilon_x = 0$ so we have 
$$H(P, Q)^2 \leq  \left( 1 - \sum_x q(x)\left(1 + \frac{\epsilon_x}{2} - \frac{\epsilon_x^2}{2}\right)\right) = \frac{1}{2}\sum_{x\in \Sigma}q(x)\epsilon_x^2$$

\subsubsection{Proof of Claim~\ref{cl:individualV}}
Let $x\in \Sigma$  has $k$ $1$'s and $(q-k)$ $0$'s. Since the $\mu_i$ is either the distribution $D_1$ with probability $1/2$ or distribution $D_2$ with probability $1/2$, so the probability of $x$ appearing, when drawn from $\mu_i^q$, is

\begin{align*}
  &\frac{1}{2} \left((\frac{1}{2} + \epsilon')^k(\frac{1}{2} - \epsilon')^{q-k} + (\frac{1}{2} - \epsilon')^k(\frac{1}{2} + \epsilon')^{q-k}\right)&\\
 & =  \frac{1}{2^q}\frac{(1 + 2\epsilon')^k(1-2\epsilon')^{q-k} + (1 - 2\epsilon')^k(1+2\epsilon')^{q-k}}{2}&
\end{align*}

\noindent Using the inequality $(1 + x)^r \geq 1 + xr$ (holds for $x\geq -1$ and $r \in \mathbb{N}$), we have
\begin{align*}
  (1 + 2\epsilon')^k(1-2\epsilon')^{n-k} + (1 - 2\epsilon')^k(1+2\epsilon')^{n-k}\geq (1 + 2k\epsilon')(1 - 2(q-k)\epsilon') + (1 - 2k\epsilon')(1 + 2(q-k)\epsilon') 
\end{align*}

\noindent The right-hand side of the above inequality is equal to $(2 - 8k(q-k)\epsilon'^2)$. Thus we have 
$$\Pr_{\mu_i^q}(x) \geq \frac{1}{2^q}(1 - 4k(q-k)\epsilon'^2) \geq  \frac{1}{2^q}\left(1 - \frac{4q^2\epsilon}{n}\right)$$

For the upper bound, we shall use the following inequality. Let $r \in \mathbb{N},x \geq -1$ be such that $xr<1$.  It holds that
$$(1 + x)^r \leq 1 + xr + x^2r^2$$

The above inequality can be easily proved using the following facts.
\begin{enumerate}
\item {\bf When $r \in \mathbb{N}, x >0$ and $rx<1$ }

  \begin{enumerate}
  \item  it holds that $(1+x)^r\leq \mathrm{e}^{rx}$.
    \item  as $0\leq rx < 1$ it holds that $\mathrm{e}^{rx} \leq 1+xr +x^2r^2$.
  \end{enumerate}

\item {\bf When $r \in \mathbb{N}, -1\leq x\leq 0 $} it holds that $(1+x)^r \leq 1+xr+x^2r^2$ (can be proved using induction on $r$).

\end{enumerate}

Since $\epsilon' = 2\sqrt{\epsilon/n}$ and $q \leq \sqrt[4]{n}$, $\epsilon'q <1$. Hence, for all $k\leq q$,

$$(1 + 2\epsilon')^k(1-2\epsilon')^{q-k} \leq (1 + 2k\epsilon' + 4k^2\epsilon'^2)(1 - 2(q-k)\epsilon' + 4(q-k)^2\epsilon'^2),$$
$$(1 - 2\epsilon')^k(1+2\epsilon')^{q-k} \leq (1 - 2k\epsilon' + 4k^2\epsilon'^2)(1 + 2(q-k)\epsilon' + 4(q-k)^2\epsilon'^2)$$
 
 and thus $$\frac{(1 + 2\epsilon')^k(1-2\epsilon')^{q-k} + (1 - 2\epsilon')^k(1+2\epsilon')^{q-k}}{2} \leq \left(1 + 2\epsilon'^2q^2 + q^4\epsilon'^4\right)$$
 
Since $\epsilon' = 2\sqrt{\epsilon/n}$ and $q \leq \sqrt[4]{n}$ so we have   
 $$\left(1 + 2\epsilon'^2q^2 + q^4\epsilon'^4\right) \leq \left( 1 + \frac{10\epsilon q^2}{n}\right).$$
 And thus, we have 
 $$\frac{1}{2^q}\left(1 - \frac{4q^2\epsilon}{n}\right) \leq  \Pr_{x\leftarrow \mu_i^q}(x) \leq \frac{1}{2^q}\left( 1 + \frac{10\epsilon q^2}{n}\right)$$


\end{document}